\newtheorem{theorem}{Theorem}[section]
\newtheorem{lemma}[theorem]{Lemma}
\newtheorem{claim}[theorem]{Claim}
\theoremstyle{definition}
\newtheorem{definition}[theorem]{Definition}
\newtheorem*{problem}{Problem}
\newtheorem*{algorithm*}{Algorithm}
\theoremstyle{remark}
\newtheorem*{remark}{Remark}
\newcommand{\argmin}{\mathop{\rm arg\,min}\limits}
\newcommand{\id}{1_\Gamma}
\newcommand{\RR}{{\mathbb R}}
\newcommand{\ZZ}{{\mathbb Z}}
\newcommand{\cS}{{\cal S}}
\newcommand{\tf}{{\tilde{f}}}
\newcommand{\tE}{{\tilde{E}}}
\newcommand{\tF}{{\tilde{F}}}
\newcommand{\tG}{{\tilde{G}}}
\newcommand{\tP}{{\tilde{P}}}
\newcommand{\tQ}{{\tilde{Q}}}
\newcommand{\tR}{{\tilde{R}}}
\newcommand{\tT}{{\tilde{T}}}
\newcommand{\tV}{{\tilde{V}}}
\newcommand{\tell}{{\tilde{\ell}}}
\newcommand{\comp}{{\mathrm{comp}}}
\newcommand{\dist}{{\mathrm{dist}}}
\newcommand{\depth}{{\mathrm{depth}}}
\newcommand{\pred}{{\mathrm{pred}}}
\newcommand{\parent}{{\mathrm{parent}}}
\renewcommand{\root}{{\mathrm{root}}}
\title{Finding a Shortest~Non-zero~Path in~Group-Labeled~Graphs\thanks{A preliminary version~\cite{Yamaguchi2020} of this paper appeared in SODA 2020.}}
\author{
  Yoichi Iwata\thanks{AtCoder Inc., Tokyo 160-0022, Japan. Email: \texttt{wata@atcoder.jp}} \and
  Yutaro Yamaguchi\thanks{Osaka University, Osaka 565-0871, Japan. Email: \texttt{yutaro.yamaguchi@ist.osaka-u.ac.jp}}}
\date{\empty}
\begin{document}
\maketitle
\thispagestyle{empty}

\begin{abstract}
We study a constrained shortest path problem in group-labeled graphs with nonnegative edge length, called the \emph{shortest non-zero path problem}.
Depending on the group in question, this problem includes two types of tractable variants in undirected graphs: one is the parity-constrained shortest path/cycle problem, and the other is computing a shortest noncontractible cycle in surface-embedded graphs.

For the shortest non-zero path problem with respect to finite abelian groups, Kobayashi and Toyooka (2017) proposed a randomized, pseudopolynomial-time algorithm via permanent computation.
For a slightly more general class of groups, Yamaguchi (2016) showed a reduction of the problem to the weighted linear matroid parity problem.
In particular, some cases are solved in strongly polynomial time via the reduction with the aid of a deterministic, polynomial-time algorithm for the weighted linear matroid parity problem developed by Iwata and Kobayashi (2021), which generalizes a well-known fact that the parity-constrained shortest path problem is solved via weighted matching.

In this paper, as the first general solution independent of the group, we present a rather simple, deterministic, and strongly polynomial-time algorithm for the shortest non-zero path problem.
This result captures a common tractable feature behind the parity and topological constraints in the shortest path/cycle problem.
The algorithm is based on Dijkstra's algorithm for the unconstrained shortest path problem and Edmonds' blossom shrinking technique in matching algorithms;
this approach is inspired by Derigs' faster algorithm (1985) for the parity-constrained shortest path problem via a reduction to weighted matching.
Furthermore, we improve our algorithm so that it does not require explicit blossom shrinking, and make the computational time match Derigs' one.
In the speeding-up step, a dual linear programming formulation of the equivalent problem based on potential maximization for the unconstrained shortest path problem plays a key role.
\end{abstract}

\paragraph{Keywords}
Shortest paths/cycles, Group-labeled graphs, Blossom shrinking, Parity constraints, Noncontractible cycles in surfaces.

\clearpage
\thispagestyle{empty}
\tableofcontents
\clearpage
\setcounter{page}{1}

\section{Introduction}
Finding a shortest path between two specified vertices, say $s$ and $t$, is a fundamental task in graphs.
As a variant, it is well-known that one can find a shortest \emph{odd} (or \emph{even}) $s$--$t$ path in an undirected graph with nonnegative edge length in strongly polynomial time via a reduction to the weighted matching problem (see, e.g., \cite[\S~29.11e]{Schrijver2003}), where ``odd'' (or ``even'') designates the parity of the number of traversed edges.
We remark that, in the directed case, even determining whether a given directed graph contains an odd (or even) directed path from $s$ to $t$ or not is NP-complete~\cite{LP}.

A shortest cycle is also closely related, as it can be found, at least, by computing a shortest $s$--$t$ path in the graph obtained by removing each edge $e = \{s, t\}$.
Shortest \emph{noncontractible} cycles in graphs embedded in surfaces have been studied in topological graph theory with several motivations; e.g., in the unweighted case, the minimum number of edges in a noncontractible cycle is an index of embeddings called the \emph{edge-width}.
The first polynomial-time algorithm for finding a shortest noncontractible cycle is based on a simple observation given by Thomassen~\cite{Thomassen1990}, the so-called \emph{$3$-path condition}.
The current fastest algorithm was proposed by Erickson and Har-Peled~\cite{EH2004}, and there are several faster ones for bounded-genus cases~\cite{CCE2013, Fox2013}.
For more detailed literature, we refer the readers to \cite{Colin2017, Erickson2012}.

As a common generalization of the parity and topological conditions (or others in some contexts), paths and cycles with label conditions in group-labeled graphs have recently been investigated from both combinatorial and algorithmic points of view \cite{CGGGLS2006, CCG2008, Huynh, HJW2019, KW2006, KKY, LRS2017, TY2016, Wollan2011, Yamaguchi2016_SIDMA}, where a group-labeled graph is a directed graph with each arc labeled by an element of a fixed group.
Formally, for a group $\Gamma$, a \emph{$\Gamma$-labeled graph} is a pair of a directed graph and a mapping from the arc set to $\Gamma$.
In a $\Gamma$-labeled graph, the label of a walk is defined by sequential applications of the group operation of $\Gamma$ to the labels of the traversed arcs, where each arc (e.g., from $u$ to $v$) can be traversed in the backward direction (from $v$ to $u$) by inverting its label.
A walk is said to be \emph{non-zero} if its label is not equal to the identity element $\id$ of $\Gamma$. See Section~\ref{sec:pre} for precise definitions.

On one hand, the parity condition in undirected graphs is handled by choosing $\Gamma = \ZZ_2 = \ZZ / 2\ZZ = (\{0, 1\}, +)$, orienting each edge arbitrarily, and assigning the label $1 \in \ZZ_2$ to all the resulting arcs.
Then, the label of a walk is non-zero if and only if the number of traversed edges is odd.
Conversely, the label in any $\ZZ_2$-labeled graph can be regarded as the parity by subdividing each arc with label $0 \in \ZZ_2$ as two arcs with label $1 \in \ZZ_2$.

On the other hand, contractibility in surfaces can be represented as follows.
Suppose that an undirected graph $G = (V, E)$ is embedded in a surface $S$.
Fix any point $x$ in $S$, and let $\Gamma = \pi_1(S, x)$ be the fundamental group of $S$ at the basepoint $x$, in which each element is a homotopy class of closed curves with endpoint $x$ and the group operation corresponds to composition of such curves.
We also fix any simple curve $\gamma_v$ in $S$ from $x$ to each vertex $v \in V$.
Then, after orienting each edge arbitrarily, we can define the label of each resulting arc $\vec{e} = uv$ as the homotopy class that contains the closed curve composed of $\gamma_u$, $\vec{e}$, and $\overline{\gamma_v}$ (the reverse of $\gamma_v$), so that the label of a closed walk in $G$ is non-zero if and only if it is noncontractible in $S$.

\medskip
In this paper, we focus on the \emph{shortest non-zero path problem}: given a $\Gamma$-labeled graph with two specified vertices $s$ and $t$ and a nonnegative length of each edge in the underlying graph, we are required to find a shortest non-zero $s$--$t$ path.
Note that any element $\alpha \in \Gamma$ can be chosen as the forbidden label instead of $\id$, e.g., by adding to the input graph a new vertex $t'$ (as the end vertex instead of $t$) and a new arc from $t$ to $t'$ with label $\alpha^{-1}$.
Thus, a shortest non-zero cycle is also found, at least, by solving this problem for the graph obtained by removing each arc.

For the case when $\Gamma$ is finite and abelian, Kobayashi and Toyooka~\cite{KT2017} proposed a randomized, pseudopolynomial-time algorithm via permanent computation, where ``pseudopolynomial'' means that the computational time polynomially depends on the length values (i.e., if every edge has a unit length, then it is bounded by a polynomial in the graph size).
For a slightly more general case when $\Gamma$ is finitely generated and abelian (or dihedral, etc.), the second author~\cite{Yamaguchi2016_ISAAC} showed that a generalized problem reduces to the weighted linear matroid parity problem, for which Iwata and Kobayashi~\cite{IK2021} devised a deterministic, polynomial-time algorithm for the case when the field in question is finite or that of rational numbers.
In particular, for any prime $p$, the shortest non-zero path problem in $\ZZ_p$-labeled graphs is solved in strongly polynomial time via the reduction, which generalizes the aforementioned fact that the shortest odd (even) path problem is solved via weighted matching.
However, since the fundamental group $\pi_1(S, x)$ is infinite or non-abelian even when $S$ is the torus or the Klein bottle, respectively,
these results do not explain combinatorial tractability of topological constraints well.

In this paper, we present a deterministic, strongly polynomial-time algorithm for the shortest non-zero path problem in general, where basic operations on the group $\Gamma$ (such as the group operation, identity test of two elements, and getting the inverse element) as well as arithmetics on edge length (e.g., for reals) are regarded as elementary (i.e., performed in constant time).
This implies that shortest odd (even) paths/cycles and shortest noncontractible cycles enjoy a common tractable feature that is captured as one induced by groups.

\begin{theorem}\label{thm:main}
There exists a deterministic algorithm for the shortest non-zero path problem that requires ${\rm O}(m \log n)$ elementary operations,
where $n$ and $m$ denote the numbers of vertices and of arcs (edges), respectively, in the input graph, which is assumed to be connected.
\end{theorem}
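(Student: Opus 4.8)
The plan is to lift the classical reduction of the shortest odd $s$--$t$ path problem to weighted matching into a direct, group-oblivious combinatorial algorithm that combines a Dijkstra-type scan with Edmonds' blossom-shrinking technique. Before the main algorithm I would make only harmless reductions: by the gadget mentioned above (a new end vertex $t'$ together with an arc from $t$ to $t'$) it suffices to forbid exactly the label $\id$; discarding vertices unreachable from $s$ and invoking the connectivity assumption keeps $m=\Omega(n)$; and I would keep in mind throughout that the \emph{path} constraint is genuinely binding here --- unlike the unconstrained problem, a non-zero $s$--$t$ \emph{walk} can be strictly shorter than, or exist in the absence of, any non-zero $s$--$t$ \emph{path} (e.g.\ when a non-zero cycle hangs off an interior vertex of all $s$--$t$ paths) --- so the analysis must track path-decompositions of walks rather than walks alone.

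\textbf{The core algorithm.}
Run a Dijkstra-style scan from $s$ on a graph that is dynamically contracted. Maintain a shortest-path forest, tentative distances $d(\cdot)$, and for each reached (super-)vertex a \emph{tree-label} $\psi(\cdot)\in\Gamma$ equal to the label of the forest path from $s$. When the scan inspects an edge $uv$ between two already-reached vertices for which $\psi(u)\,\ell(uv)\,\psi(v)^{-1}\ne\id$ --- so that $s\rightsquigarrow u\to v\rightsquigarrow s$ along the forest is a non-zero closed walk among currently-reached vertices --- we have detected a \emph{blossom}: a vertex set $B$ through whose base $b$ (its element nearest $s$) run two $s$-walks of \emph{different} labels. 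We shrink $B$ into one super-vertex, recording $d(b)$, the two attainable label offsets, and for each $v\in B$ its two around-the-blossom distances from $b$; then we continue the scan on the contracted graph with updated labels and distances, exactly as one contracts an odd cycle in matching but now carrying the group labels along. The scan halts as soon as the target event occurs at $t$, or at the super-vertex currently containing $t$: either $t$ is settled along a forest edge with non-identity label whose certificate is realizable as an actual path, or $t$ lies inside a shrunk blossom from which such a path can be extracted. If this fires, we unshrink the nested blossoms to reconstruct a genuine simple non-zero $s$--$t$ path of the recorded length; if the scan finishes without it, we report that no non-zero $s$--$t$ path exists.

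\textbf{Correctness --- the main obstacle.}
The hard part is the \emph{blossom lemma} together with the unshrinking. I would prove: (i) a shrunk set $B$ with base $b$ satisfies the group analogue of the matching-blossom property --- for any choice of the vertices at which an $s$--$t$ path enters and leaves $B$ and for either relevant label, there is a sub-path through $B$ realizing that label whose length equals the distance recorded for the super-vertex; (ii) consequently, shrinking $B$ alters neither the existence nor the length of a shortest non-zero $s$--$t$ path, and the halting condition fires precisely when that length has been determined; (iii) unshrinking the nested blossoms yields a \emph{simple} $s$--$t$ path with the desired label, which is where the walk-versus-path subtlety is finally discharged. To certify optimality and to organize this argument, I would also recast the task as an ``equivalent problem'' admitting a linear-programming dual of potential-maximization type --- the natural generalization of maximizing $p(t)-p(s)$ subject to $p(v)-p(u)\le\ell(uv)$ for the unconstrained problem --- so that the length returned comes with a matching dual solution.

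\textbf{Efficiency.}
To obtain the $O(m\log n)$ bound: as in ordinary Dijkstra with a binary heap, each (super-)vertex leaves the queue once and each edge is relaxed $O(1)$ times, costing $O(m\log n)$; since each shrink merges at least two super-vertices and shrinks are never undone during the scan, there are fewer than $n$ of them, and tracking which super-vertex contains a given vertex by a union-find structure, together with re-keying the affected queue entries, adds only near-linear amortized overhead. The point that yields the clean bound --- and that lets the final algorithm dispense with explicit contraction altogether --- is to simulate shrink and unshrink \emph{implicitly} by adjusting the distance keys of blossom members (as in Gabow-style implementations of Edmonds' algorithm), so that no graph is ever physically modified; the delicate point in the running-time analysis is then to check that recomputing tree-labels after an implicit shrink is likewise $O(1)$ amortized per affected edge, which follows from the forest structure of the $\psi(\cdot)$ values. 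Charging each group operation, identity test, inversion, and length comparison as a single elementary operation gives the claimed $O(m\log n)$.
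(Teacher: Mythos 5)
Your overall plan --- Dijkstra-style scan, blossom detection and shrinking, union-find for super-vertices, a potential-style dual LP, implicit shrinking in the fast version --- matches the paper's strategy, but there is a genuine gap in \emph{when} you shrink. You shrink a blossom the moment the scan inspects an inconsistent edge $e=\{u,v\}$ between two reached vertices, i.e.\ at Dijkstra time $\max(\dist_T(u),\dist_T(v))$, and from then on you treat the super-vertex as fixed with recorded around-the-blossom distances. However, the paper's blossom lemma (Lemma~\ref{lem:second}) holds only for a \emph{lowest} blossom, one minimizing $h(e)\coloneqq\dist_T(u)+\dist_T(v)+\ell(e)$, and the correct priority for the shrink event is $h(e)$, not the detection time. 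The two orderings disagree: take $T$ containing $s$--$a$ (length $1$), $a$--$b$ (length $1$), $a$--$c$ (length $1$), with non-tree inconsistent edges $\{s,b\}$ of length $100$ and $\{b,c\}$ of length $1$. If $b$ is settled before $c$, the edge $\{s,b\}$ is detected first (at time $2$), yet its blossom has height $51$, far above the height $2.5$ of the blossom on $\{b,c\}$. Shrinking on detection records an unorthodox distance $100$ at $b$ and locks $b$ into a super-vertex based at $s$, after which the true shortest unorthodox $s$--$b$ path $s$--$a$--$c$--$b$ of length $3$ can no longer be recovered. The paper avoids this by decoupling detection from shrinking: it runs Dijkstra to completion to fix $T$ and $\dist_T$ and then processes candidate shrink events from a separate priority queue keyed by $h(e)$, lowest first (Algorithm {\sc FastSUP}); an interleaved scan is also possible, but only if the shrink event for $e$ is \emph{scheduled} at priority $h(e)/2$ rather than fired on sight.

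Two secondary points. You correctly gesture at a potential-maximization dual, but the constraint set is not a mere relabeling of $p(v)-p(u)\le\ell(e)$: in the paper it is indexed by pairs $\bigl(e=\{u,v\},\ w\in V(P_u)\triangle V(P_v)\bigr)$, with genuinely different shapes for consistent and inconsistent $e$, and it is exactly this structure that certifies the recorded heights. You also correctly flag that expansion may yield a non-simple walk, but this is not discharged by the blossom property alone; the paper needs a separate simplification step (Definition~\ref{def:simplify}, Lemma~\ref{lem:simplify}) with a case analysis to turn the expanded walk into an unorthodox path of no greater length.
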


Our result is inspired by and essentially extends an $\mathrm{O}(m \log n)$-time algorithm for the shortest odd (even) path problem given by Derigs~\cite{Derigs1985}, but its extendability is quite nontrivial.
Derigs' algorithm utilizes the reduction to weighted matching; more specifically, it computes a shortest augmenting $s$--$t$ path in an auxiliary two-layered graph\footnote{Each layer is a copy of the original input graph $G$ (intuitively, one is regarded as odd and the other is even), and two vertices in the different layers are adjacent if and only if they are originated by the same vertex in $G$.} with respect to a nearly perfect matching between the two layers that only exposes $s$ and $t$.
This approach is highly specialized to the case of $|\Gamma| = 2$ from the viewpoint of our problem, and a counterpart of such an auxiliary graph seems hopeless for any other case even when $|\Gamma| = 3$.

In order to employ a similar idea directly in the original input graph, we slightly reformulate the shortest non-zero path problem as follows.
We first compute a shortest-path tree $T$ rooted at $s$ by Dijkstra's algorithm~\cite{Dijkstra} (by ignoring the constraint).
If a unique $s$--$t$ path $P_t$ in $T$ is non-zero, then we conclude that $P_t$ is a desired solution.
Otherwise, the remaining task is to find an $s$--$t$ path that is shortest among those whose labels are different from $P_t$, which we call the \emph{shortest unorthodox path problem}.

As a key concept to solve this problem, we introduce a \emph{lowest blossom}\footnote{This blossom corresponds to a usual blossom after the reduction to weighted matching when $|\Gamma| = 2$.} with respect to $T$, which is a non-zero cycle $C$ enjoying the following properties.
If $t$ is on $C$, then we can assure that the $s$--$t$ path obtained from $P_t$ by detouring around $C$ (which must change the label) is a desired solution.
Otherwise, we can obtain a shortest unorthodox $s$--$t$ path by expanding a shortest unorthodox $s$--$t$ path in a small graph after shrinking $C$ into a single vertex, which is found recursively.
In addition, we can find a lowest blossom in $\mathrm{O}(m)$ time per one.
Since the number of vertices must decrease by shrinking, we thus obtain a simple $\mathrm{O}(nm)$-time algorithm.

While the recursive algorithm is intuitive and easy to explain, we have to avoid such an explicit recursion in order to make our algorithm match Derigs' one in running time.
For this purpose, we provide an alternative $\mathrm{O}(nm)$-time algorithm without explicit blossom shrinking, and then improve its running time bound to ${\rm O}(m \log n)$ with the aid of basic data structures.
We also introduce a dual linear programming (LP) formulation of the shortest unorthodox path problem, which is based on potential maximization for the unconstrained shortest path problem.
Interestingly, the correctness of the alternative algorithm and the LP formulation is shown simultaneously.

As a byproduct of introducing the concept of lowest blossoms, we also obtain a faster algorithm for finding a shortest non-zero cycle than the na\"ive approach by solving the shortest non-zero path problem for each arc, which requires ${\rm O}(m^2 \log n)$ time in total.
The idea is similar to the algorithm for finding a noncontractible cycle proposed in \cite{EH2004}.
We see that each lowest blossom corresponds to a shortest non-zero closed walk with end vertex $s$.
Thus, by the nonnegativity of edge length, for finding a shortest non-zero cycle, it suffices to compute one lowest blossom with respect to one shortest-path tree rooted at each vertex.
This can be done in linear time per vertex after computing a shortest-path tree by Dijkstra's algorithm, and a bottleneck is an $\mathrm{O}(m + n \log n)$-time implementation of Dijkstra's algorithm with the aid of Fibonacci heaps (see, e.g., \cite[$\S$~7.4]{Schrijver2003}).

\begin{theorem}\label{thm:cycle}
There exists a deterministic algorithm for finding a shortest non-zero cycle in a  $\Gamma$-labeled graph with nonnegative edge length
that requires ${\rm O}(n(m + n\log n))$ elementary operations.
\end{theorem}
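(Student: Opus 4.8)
The plan is to iterate over all choices of basepoint, invoking for each the structural theory of lowest blossoms developed for Theorem~\ref{thm:main}. We may assume the input graph is connected (otherwise handle each connected component separately, since every cycle lies inside one component). For each vertex $v$, we first compute one shortest-path tree $T_v$ rooted at $v$ by Dijkstra's algorithm~\cite{Dijkstra}, and then compute one lowest blossom $C_v$ with respect to $T_v$, if it exists, together with the associated non-zero closed walk $W_v$ with end vertex $v$ — the stem of the blossom traversed downwards, then $C_v$, then the stem traversed back — and its length $\mathrm{len}(W_v)$. If no $v$ admits a lowest blossom, we report that the graph has no non-zero cycle; otherwise we output the cycle $C_v$ minimizing $\mathrm{len}(W_v)$ over all admissible $v$.

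For correctness we use the following properties of lowest blossoms, established for Theorem~\ref{thm:main}: a lowest blossom with respect to $T_v$ is itself a non-zero cycle; $T_v$ admits a lowest blossom if and only if there is a non-zero closed walk with end vertex $v$; and in that case $W_v$ is a shortest non-zero closed walk with end vertex $v$. Since the graph is connected, a non-zero cycle exists if and only if, for every $v$, there is a non-zero closed walk with end vertex $v$: given a non-zero cycle $C^\ast$ and a path $Q$ from $v$ to a vertex of $C^\ast$, the walk $Q\,C^\ast\,\overline{Q}$ is non-zero because conjugation preserves non-triviality of a label. This settles the feasibility test. For optimality, assume a non-zero cycle exists and let $C^\ast$ be a shortest one; pick any vertex $v^\ast$ on $C^\ast$. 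Then $C^\ast$ is a non-zero closed walk with end vertex $v^\ast$, so $\mathrm{len}(W_{v^\ast}) \le \mathrm{len}(C^\ast)$ since $W_{v^\ast}$ is a shortest such walk; hence $\min_v \mathrm{len}(W_v) \le \mathrm{len}(C^\ast)$. If $v$ attains this minimum, then, since the stem has nonnegative length, $\mathrm{len}(C_v) \le \mathrm{len}(W_v) = \min_v \mathrm{len}(W_v) \le \mathrm{len}(C^\ast)$; as $C_v$ is a non-zero cycle and $C^\ast$ is shortest among such, $\mathrm{len}(C_v) = \mathrm{len}(C^\ast)$, so the output $C_v$ is indeed a shortest non-zero cycle.

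For the running time, Dijkstra's algorithm with Fibonacci heaps builds each $T_v$ in ${\rm O}(m + n \log n)$ time. Given $T_v$, one precomputes the label of the tree path from $v$ to every vertex in ${\rm O}(n)$ time, after which each non-tree edge can be tested in ${\rm O}(1)$ time for whether it closes a non-zero walk through $v$; combined with linear-time preprocessing for lowest-common-ancestor queries in $T_v$, this identifies the lowest blossom and its closed walk $W_v$ in ${\rm O}(m)$ time overall. Summing over the $n$ basepoints yields ${\rm O}(n(m + n \log n))$ elementary operations. I expect the main obstacle to be exactly this ${\rm O}(m)$-time extraction of a single lowest blossom from a fixed shortest-path tree — bypassing the iterated shrinking used in the proof of Theorem~\ref{thm:main} — together with verifying that the plain stem-plus-blossom walk $W_v$ it produces is genuinely a shortest non-zero closed walk with end vertex $v$; once that structural fact is in hand, the minimization over basepoints and the nonnegativity argument above are routine.
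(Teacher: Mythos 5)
Your approach coincides with the paper's: for each vertex $v$, compute a shortest-path tree $T_v$ and a lowest blossom, exploit that the associated closed walk $W_v$ is a shortest non-zero closed walk with end vertex $v$, and use nonnegativity of edge lengths to descend from closed walks to cycles; the conjugation argument for feasibility and the minimization over basepoints match the paper's reasoning. The one step you defer is mis-attributed, though: that $W_v$ is a shortest non-zero closed walk at $v$ is \emph{not} a by-product of the machinery for Theorem~\ref{thm:main} (Lemma~\ref{lem:second} concerns $s$--$w$ paths, not closed walks); it is precisely the new ingredient the paper proves here. It is quick to supply from Lemma~\ref{lem:tree}: any non-zero closed walk $W$ with end vertex $v$ must traverse an inconsistent edge $f=\{x,y\}$ by Lemma~\ref{lem:tree}-(2), and then $\ell(W) \ge \dist_{T_v}(x) + \ell(f) + \dist_{T_v}(y) \ge \ell(W_v)$ by Lemma~\ref{lem:tree}-(1) applied to the two subwalks of $W$ around $f$ together with minimality of the lowest blossom's height. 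Finally, your running-time worry is unfounded: extracting one lowest blossom from a fixed $T_v$ needs no LCA preprocessing or shrinking; one simply scans all inconsistent non-tree edges and picks the one minimizing $\dist_{T_v}(u)+\dist_{T_v}(v)+\ell(e)$, which is ${\rm O}(m)$, dominated by Dijkstra.
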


The rest of this paper is organized as follows.
In Section~\ref{sec:pre}, we define terms and notations, and sketch a basic strategy.
In Section~\ref{sec:CUC}, as a key concept in our algorithms, we introduce \emph{lowest blossoms} and several operations.
In Section~\ref{sec:algorithm}, we present an ${\rm O}(nm)$-time recursive algorithm based on the blossom shrinking idea.
Finally, in Section~\ref{sec:fast}, we show an ${\rm O}(m \log n)$-time algorithm together with a dual LP formulation, which complete the proof of Theorem~\ref{thm:main}.

\section{Preliminaries}\label{sec:pre}
Let $\Gamma$ be a group, which can be non-abelian or infinite.
We adopt the multiplicative notation for $\Gamma$ with denoting the identity element by $\id$.

A \emph{$\Gamma$-labeled graph} is a pair $(\vec{G}, \psi)$ of a directed graph $\vec{G} = (V, \vec{E})$ and a mapping $\psi \colon \vec{E} \to \Gamma$.
We denote by $G = (V, E)$ the underlying graph of $\vec{G}$, i.e., $E \coloneqq \{\, e = \{u, v\} \mid \vec{e} = uv \in \vec{E} \,\}$,
and define $\psi_G(e, uv) \coloneqq \psi(\vec{e})$ and $\psi_G(e, vu) \coloneqq \psi(\vec{e})^{-1}$ for each edge $e = \{u, v\} \in E$ with the corresponding arc $\vec{e} = uv \in \vec{E}$.
For simple notation, we deal with a $\Gamma$-labeled graph $(\vec{G}, \psi)$ as its underlying graph $G$ including the information of $\psi_G$ defined above.
We assume that $G$ has no loop but may have parallel edges, i.e., $E$ is a multiset of two-element subsets of $V$.
We often refer to a subgraph $(V, F)$ of $G$ as its edge set $F \subseteq E$.
For a vertex $v \in V$, we define $\delta_G(v) \coloneqq \{\, e \mid v \in e \in E \,\}$.

A {\em walk} in a $\Gamma$-labeled graph $G = (V, E)$ is an alternating sequence of vertices and edges,
$W = (v_0, e_1, v_1, \ldots, e_k, v_k)$, such that $e_i = \{v_{i-1}, v_i\} \in E$ for each $i = 1, 2, \ldots, k$.
We often call $W$ a \emph{$v_0$--$v_k$ walk} by specifying its end vertices $v_0$ and $v_k$.
We say that $W$ is \emph{closed} when $v_0 = v_k$.
We define $V(W) \coloneqq \{v_0, v_1, \dots, v_k\}$ and $E(W) \coloneqq \{e_1, e_2, \dots, e_k\}$,
where the multiplicity is ignored for $V(W)$ (as a set) and is included for $E(W)$ (as a multiset).
For $i, j$ with $0 \leq i \leq j \leq k$,
let $W[v_i, v_j]$ denote the subwalk $(v_i, e_{i+1}, v_{i+1}, \dots, e_j, v_j)$ of $W$ (when $v_i$ or $v_j$ appears multiple times in $W$, we specify which one is chosen).
We call $W$ a {\em path} if all vertices $v_i$ are distinct (i.e., $|V(W)| = k + 1$),
and a {\em cycle} if $v_0 = v_k$ and $W[v_1, v_k]$ is a path with $e_1 \not\in E(W[v_1, v_k])$. 
Let $\overline{W}$ denote the reversed walk of $W$, i.e., $\overline{W} = (v_k, e_k, \dots, v_1, e_1, v_0)$.
For a walk $W' = (v'_0, e'_1, v'_1, \dots, e'_l, v'_l)$ with $v'_0 = v_k$,
we define the concatenation of $W$ and $W'$ as $W * W' \coloneqq (v_0, e_1, v_1, \dots, e_k, v_k = v'_0, e'_1, v'_1, \dots, e'_l, v'_l)$.

For given edge lengths $\ell \in \RR_{\geq 0}^E$,
the {\em length} of $W$ is defined as $\ell(W) \coloneqq \sum_{e \in E(W)} \ell(e) = \sum_{i=1}^k \ell(e_i)$.
We say that a walk $W$ is \emph{shortest} if $\ell(W)$ is minimized under a specified constraint, e.g., $W$ is an $s$--$t$ path in $G$ for some fixed vertices $s, t \in V$.
The {\em label} of $W$ is defined as 
\[\psi_G(W) \coloneqq \psi_G(e_1, v_0v_1)\cdot\psi_G(e_2, v_1v_2)\cdot \cdots \cdot \psi_G(e_k,  v_{k-1}v_k),\]
where we remark again that the group $\Gamma$ can be non-abelian.
By definition, we always have $\psi_G(\overline{W}) = \psi_G(W)^{-1}$ and $\psi_G(W * W') = \psi_G(W) \cdot \psi_G(W')$.
A walk $W$ in $G$ is \emph{non-zero} if $\psi_G(W) \neq \id$.
Note that whether cycles are non-zero or not is invariant under the choices of direction and end vertices, because $\psi_G(\overline{C}) = \psi_G(C)^{-1}$ and $\psi_G(C[v_1, v_k] * (v_0, e_1, v_1)) = \psi_G(e_1, v_0v_1)^{-1}\cdot\psi_G(C)\cdot\psi_G(e_1, v_0v_1)$ hold for any cycle $C = (v_0, e_1, v_1, e_2, v_2, \dots, e_k, v_k = v_0)$.

We are now ready to state our problem formally.

\begin{problem}[Shortest Non-zero Path]
\begin{description}
  \setlength{\itemsep}{0mm}
\item[]

\item[Input:]
  A $\Gamma$-labeled graph $G = (V, E)$, edge lengths $\ell \in \RR_{\geq 0}^{E}$, and two distinct vertices $s, t \in V$.

\item[Goal:]
  Find an $s$--$t$ path $P$ in $G$ minimizing $\ell(P)$ subject to $\psi_G(P) \neq \id$ (if any).
\end{description}
\end{problem}

Without loss of generality, we assume that the input graph is connected.
In order to solve this problem,
we first compute a shortest-path tree $T$ of $(G, \ell)$ rooted at $s$ (formally defined in Definition~\ref{def:tree}) by Dijkstra's algorithm.
If, fortunately, a unique $s$--$t$ path $P_t$ in $T$ is non-zero, then we have to do nothing else.
Otherwise, an $s$--$t$ path $Q$ in $G$ is non-zero if and only if $\psi_G(Q) \neq \psi_G(P_t) \ ( = \id)$.
When $T$ is fixed, we say that an $s$--$t$ path $Q$ is \emph{unorthodox} if $\psi_G(Q) \neq \psi_G(P_t) \eqqcolon \psi_T(t)$.
Thus, our main task is to solve the following problem.

\begin{problem}[Shortest Unorthodox Path]
\begin{description}
  \setlength{\itemsep}{0mm}
\item[]

\item[Input:]
  A connected $\Gamma$-labeled graph $G = (V, E)$, edge lengths $\ell \in \RR_{\geq 0}^{E}$, two distinct vertices $s, t \in V$, and a shortest-path tree $T$ of $(G, \ell)$ rooted at $s$.

\item[Goal:]
  Find an $s$--$t$ path $Q$ in $G$ minimizing $\ell(Q)$ subject to $\psi_G(Q) \neq \psi_T(t)$ (if any), or conclude that all $s$--$t$ paths in $G$ are of label $\psi_T(t)$.
\end{description}
\end{problem}

We formally define a shortest-path tree and related concepts as follows.

\begin{definition}[Shortest-Path Trees]\label{def:tree}
Let $G = (V, E)$ be a connected $\Gamma$-labeled graph with edge lengths $\ell \in \RR_{\geq 0}^E$ and a specified vertex $s \in V$.
A spanning tree $T \subseteq E$ is called a \emph{shortest-path tree of $(G, \ell)$ rooted at $s$} (or an \emph{$s$-SPT} for short) if, for each $v \in V$, a unique $s$--$v$ path in $T$ is shortest in $G$.
When an $s$-SPT $T$ of $(G, \ell)$ is fixed, for each $v \in V$, we denote by $P_v$ the unique $s$--$v$ path in $T$, and define $\dist_T(v) \coloneqq \ell(P_v)$ and $\psi_T(v) \coloneqq \psi_G(P_v)$.
In addition, an edge $e = \{u, v\} \in E$ is said to be \emph{consistent (with $T$)} if $\psi_T(u) \cdot \psi_G(e, uv) = \psi_T(v)$, and \emph{inconsistent} otherwise.
Similarly, an $x$--$y$ path $R$ in $G$ is said to be \emph{orthodox} if $\psi_T(x) \cdot \psi_G(R) = \psi_T(y)$, and \emph{unorthodox} otherwise.\footnote{From the basic facts shown in \cite[\S~2.3]{KKY}, when $\psi_T(t) = \id$, we can assume that $\psi_T(v) = \id$ for all $v \in V$ by shifting the labels of edges around vertices in $V \setminus \{s, t\}$ if necessary. Hence, ``inconsistent'' and ``unorthodox'' can be translated into ``non-zero'' (when we want to solve the shortest unorthodox path problem). In addition, one can test whether a given instance is feasible (i.e., has a non-zero $s$--$t$ path) or not in linear time in advance.}
\end{definition}

A shortest-path tree rooted at $s$ can be computed by Dijkstra's algorithm with a slight adjustment to our group-labeled setting as follows.

\begin{algorithm*}[{{\sc Dijkstra}$[G, \ell, s]$}]
\begin{description}
  \setlength{\itemsep}{0mm}
\item[]
\item[Input:]
  A connected $\Gamma$-labeled graph $G = (V, E)$, edge lengths $\ell \in \RR_{\geq 0}^E$, and a vertex $s \in V$.

\item[Output:]
  A shortest-path tree $T$ of $(G, \ell)$ rooted at $s$ with $\dist_T$ and $\psi_T$.

\item[Step 1.]\vspace{1mm}
  For each $v \in V \setminus \{s\}$, set $\dist_T(v) \leftarrow +\infty$.
  Set $\dist_T(s) \leftarrow 0$, $\psi_T(s) \leftarrow \id$, and $U \leftarrow \emptyset$.

\item[Step 2.]
  While $\Delta \coloneqq \min\left\{\, \dist_T(v) \mid v \in V \setminus U \,\right\} < +\infty$, pick $v \in V \setminus U$ with $\dist_T(v) = \Delta$,
  update $U \leftarrow U \cup \{v\}$, and do the following for each edge $e = \{v, w\} \in \delta_G(v)$ with $w \not\in U$:
  if $\dist_T(w) > \Delta + \ell(e)$, then update $\dist_T(w) \leftarrow \Delta + \ell(e)$, $\psi_T(w) \leftarrow \psi_T(v) \cdot \psi_G(e, vw)$, and $\mathrm{last}(w) \leftarrow e$.
  
\item[Step 3.]
  Set $T \leftarrow \{\, \mathrm{last}(v) \mid v \in V \setminus \{s\} \,\}$, and return $T$ with $\dist_T$ and $\psi_T$.
\end{description}
\end{algorithm*}

Finally, we observe useful properties on shortest-path trees.

\begin{lemma}\label{lem:tree}
For any shortest-path tree $T$ of $(G, \ell)$ rooted at $s$, the following properties hold.
\begin{itemize}
  \setlength{\itemsep}{.5mm}
\item[$(1)$]
  For any vertices $x, y \in V$ and any $x$--$y$ walk $R$ in $G$, we have $\ell(R) \geq \left|\dist_T(x) - \dist_T(y)\right|$.
\item[$(2)$]
  For any vertices $x, y \in V$, any unorthodox $x$--$y$ path $R$ in $G$ traverses some inconsistent edge in $E \setminus T$.
\end{itemize}
\end{lemma}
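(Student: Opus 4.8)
For part~(1), the plan is to exploit the triangle-inequality-type behaviour of the distance function induced by a shortest-path tree. Since $T$ is an $s$-SPT, $\dist_T(x) = \ell(P_x)$ is the length of a shortest $s$--$x$ path in $G$, and likewise for $y$. Given any $x$--$y$ path $R$, the concatenation $P_x * R$ is an $s$--$y$ walk of length $\dist_T(x) + \ell(R)$, which contains an $s$--$y$ path of no greater length; by minimality of $\dist_T(y)$ this yields $\dist_T(y) \le \dist_T(x) + \ell(R)$. Symmetrically, using $P_y * \overline{R}$, we get $\dist_T(x) \le \dist_T(y) + \ell(R)$. Combining the two inequalities gives $\ell(R) \ge |\dist_T(x) - \dist_T(y)|$. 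The only mild subtlety here is that a walk need not be a path, so I would note explicitly that deleting cycles from an $s$--$y$ walk can only decrease its length (edge lengths are nonnegative) and leaves an $s$--$y$ path; with $\ell \ge 0$ this step is immediate.

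For part~(2), the idea is to compare the label of $R$ against what the tree ``predicts.'' Write $R = (x = v_0, e_1, v_1, \dots, e_k, v_k = y)$. For each edge $e_i = \{v_{i-1}, v_i\}$ traversed in the direction $v_{i-1}v_i$, consistency with $T$ means precisely $\psi_T(v_{i-1}) \cdot \psi_G(e_i, v_{i-1}v_i) = \psi_T(v_i)$. If \emph{every} edge of $R$ were consistent with $T$, then a telescoping product along $R$ would give
\[
\psi_T(x) \cdot \psi_G(R)
= \psi_T(v_0) \cdot \psi_G(e_1, v_0 v_1) \cdots \psi_G(e_k, v_{k-1} v_k)
= \psi_T(v_k) = \psi_T(y),
\]
which says exactly that $R$ is orthodox. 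Contrapositively, an unorthodox $R$ must traverse at least one inconsistent edge. It remains to argue that such an edge lies in $E \setminus T$: every tree edge is consistent, since for $e = \{u, v\} \in T$ with (say) $v$ the child of $u$, the path $P_v$ is $P_u$ followed by $e$, so $\psi_T(v) = \psi_T(u) \cdot \psi_G(e, uv)$ by definition of $\psi_T$. Hence the inconsistent edge guaranteed by the telescoping argument must be outside $T$, as claimed.

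I do not anticipate a genuine obstacle in either part; both are short structural facts. The one point that requires a little care is bookkeeping with the non-abelian group: the telescoping identity in part~(2) must multiply the per-edge relations \emph{in the order in which the edges are traversed}, and one must be careful that $\psi_G(e_i, v_{i-1}v_i)$ is the correctly oriented label for each step (which is exactly how $\psi_G(R)$ is defined). Once the order is respected, the cancellation $\psi_T(v_i)^{-1}$ against the next $\psi_T(v_i)$ goes through verbatim, and no commutativity is used.
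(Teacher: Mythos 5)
Your proof of part~(2) is the same as the paper's: telescope $\psi_T(v_{i-1})^{-1}\psi_T(v_i)$ along $R$ assuming all edges consistent, conclude orthodoxy, take the contrapositive, and note that $T$-edges are consistent by definition of $\psi_T$. For part~(1) you take a slightly different (equally valid) route: the paper derives a per-edge Lipschitz bound $\ell(e) \geq |\dist_T(u) - \dist_T(v)|$ for each $e = \{u,v\}$ and then sums and telescopes the absolute values along $R$, whereas you instead concatenate $P_x * R$ (and $P_y * \overline{R}$), invoke the shortest-path property of $T$ at $y$ (resp.\ $x$), and combine the two inequalities. The paper's edge-by-edge bound is marginally more local and avoids the walk-to-path extraction step that you correctly flag; your version is perhaps a hair more intuitive since it directly mirrors the triangle inequality for $\dist_T$. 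Both arguments are correct, short, and rely on the same underlying fact (shortest-path optimality of $T$), so the difference is cosmetic rather than substantive. No gaps.
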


\begin{proof}
$(1)$~
As $P_x * R$ is an $s$--$y$ walk and $P_y * \overline{R}$ is an $s$--$x$ walk, by the nonnegativity of $\ell$, we have $\ell(P_y) \leq \ell(P_x) + \ell(R)$ and $\ell(P_x) \leq \ell(P_y) + \ell(R)$.

\medskip
\noindent
$(2)$~
Fix two vertices $x, y \in V$ and an $x$--$y$ path $R = (x = v_0, e_1, v_1, \dots, e_k, v_k = y)$.
If every edge $e = \{u, v\} \in E(R)$ is consistent with $T$,
then
\begin{align*}
\psi_G(R) &= \psi_G(e_1, v_0v_1) \cdot \psi_G(e_2, v_1v_2) \cdot \cdots \cdot \psi_G(e_k, v_{k-1}v_k)\\
 &= \psi_T(v_0)^{-1} \cdot \psi_T(v_1) \cdot \psi_T(v_1)^{-1} \cdot \cdots \cdot \psi_T(v_k)\\
 &= \psi_T(x)^{-1} \cdot \psi_T(y).
\end{align*}
Hence, if $\psi_T(x) \cdot \psi_G(R) \neq \psi_T(y)$, then there exists an inconsistent edge $e = \{u, v\} \in E(R)$.
Moreover, since every edge in $T$ is consistent with $T$ by definition, we conclude that $e \in E(R) \setminus T$.
\end{proof}

\section{Lowest Blossoms}\label{sec:CUC}
In this section, we introduce the concept of lowest blossoms and several operations utilized in our recursive algorithm shown in Section~\ref{sec:algorithm}.
Throughout this section, let $G = (V, E)$ be a connected $\Gamma$-labeled graph with edge lengths $\ell \in \RR_{\geq 0}^E$ and two distinct vertices $s, t \in V$, and fix a shortest-path tree $T$ of $(G, \ell)$ rooted at $s$.
By Lemma~\ref{lem:tree}(2), if every edge $e = \{u, v\} \in E \setminus T$ is consistent with $T$,
then we can immediately conclude that all $s$--$t$ paths in $G$ are orthodox, i.e., of label $\psi_T(t)$.
We assume that there exists an inconsistent edge, and define \emph{blossoms} as follows.

\begin{definition}[Lowest Blossoms]\label{def:canonical}
For an $s$-SPT $T$ of $(G, \ell)$ and an inconsistent edge $e = \{u, v\} \in E \setminus T$,
let $W_{e} \coloneqq P_u * (u, e, v) * \overline{P_v}$ (where $u$ and $v$ are symmetric as $e$ has no direction, and we fix an arbitrary direction to define $W_e$).
Then, $W_e$ contains a unique cycle $C_e$ as its subwalk, and is decomposed as $W_{e} = P_{b_e} * C_{e} * \overline{P_{b_e}}$ (see Fig.~\ref{fig:C_e}).
We call $C_e$ a \emph{blossom}, and $b_e$ and $P_{b_e}$ its \emph{base} and \emph{stem}, respectively.
The \emph{height} of $C_e$ is defined as $\frac{1}{2}\ell(W_e) = \frac{1}{2}\left(\dist_T(u) + \dist_T(v) + \ell(e)\right)$, and a blossom of the minimum height is said to be \emph{lowest}.
\end{definition}

\begin{figure}[tb]
 \begin{center}
  \includegraphics[scale=0.8]{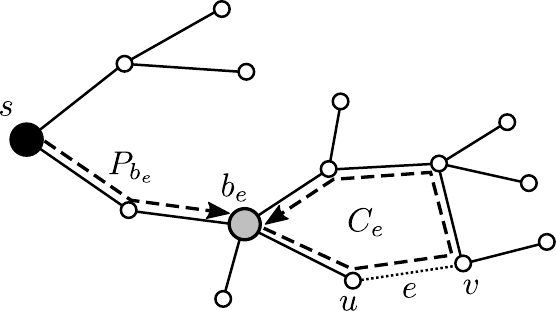}
 \end{center}\vspace{-3mm}
 \caption{The blossom $C_e$ with base $b_e$ and stem $P_{b_e}$ for an inconsistent edge $e = \{u, v\} \in E \setminus T$, where solid lines represent edges in $T$.}\vspace{-1mm}
 \label{fig:C_e}
\end{figure}

Note that any blossom $C_e$ is a non-zero cycle (which may consist of two parallel edges) as the edge $e$ is inconsistent with $T$, i.e., $\psi_T(u) \cdot \psi_G(e, uv) \neq \psi_T(v)$.
We observe the following property of lowest blossoms, which will be used to prove Theorem~\ref{thm:cycle}.

\begin{lemma}\label{lem:lb}
Let $T$ be an $s$-SPT of $(G, \ell)$,
and $C_e$ a lowest blossom with base $b_e$ for $e = \{u, v\} \in E \setminus T$.
Then, $W_{e} = P_{b_e} * C_{e} * \overline{P_{b_e}}$ is a shortest non-zero closed walk with end vertex $s$ in $G$.
\end{lemma}

\begin{proof}
Let $W$ be an arbitrary non-zero closed walk with end vertex $s$ in $G$, and we show $\ell(W) \geq \ell(W_e)$.
By Lemma~\ref{lem:tree}(2), $W$ traverses some inconsistent edge $f = \{x, y\} \in E \setminus T$, and define $W_f \coloneqq P_x * (x, f, y) * \overline{P_y}$.
Since $C_e$ is a lowest blossom, we have $\ell(W_e) \leq \ell(W_f)$.
In addition, by Lemma~\ref{lem:tree}(1), we have $\ell(W_f) \leq \ell(W[s, x]) + \ell(f) + \ell(W[y, s]) = \ell(W)$, and we are done.
\end{proof}

The following lemma shows one of the two key properties of lowest blossoms:
for any vertex $w$ on a lowest blossom $C_e$ except for the base $b_e$,
a unique detour around $C_e$ from $P_w$ yields a shortest unorthodox $s$--$w$ path $Q_w$.

\begin{lemma}\label{lem:second}
Let $T$ be an $s$-SPT of $(G, \ell)$,
and $C_e$ a lowest blossom with base $b_e$ for $e = \{u, v\} \in E \setminus T$.
Then, for any vertex $w \in V(C_e) \setminus \{b_e\}$,
the unique $s$--$w$ path $Q_w$ such that $e \in E(Q_w) \subseteq T \cup \{e\}$
satisfies the following properties.
\begin{itemize}
  \setlength{\itemsep}{.5mm}
\item[$(1)$]
  For any vertex $z \in V(P_{b_e}) \subsetneq V(Q_w)$, we have $\psi_G(Q_w[z, w]) \neq \psi_T(z)^{-1} \cdot \psi_T(w) = \psi_G(P_w[z, w])$, i.e., $Q_w[z, w]$ is unorthodox.
\item[$(2)$]
  For any vertex $z \in V(P_{b_e}) \subsetneq V(Q_w)$ and any unorthodox $z$--$w$ path $R$ in $G$, we have $\ell(R) \geq \ell(Q_w[z, w])$.
\item[$(3)$]
  $Q_w$ is a shortest unorthodox $s$--$w$ path in $G$, i.e., minimizes $\ell(Q_w)$ subject to $\psi_G(Q_w) \neq \psi_T(w)$.
\end{itemize}
\end{lemma}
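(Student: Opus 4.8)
The plan is to prove the three properties in order, using the lowest-height assumption on $C_e$ crucially only for property~(2), which I expect to be the main obstacle.

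For property~(1), the key observation is that $Q_w$ is built from $P_{b_e}$ (a path in $T$) followed by the portion of the blossom cycle $C_e$ running from $b_e$ to $w$ that uses the inconsistent edge $e$. Write $C_e$ as the two arcs from $b_e$ to $w$; exactly one of them contains $e$, and $Q_w[b_e,w]$ is the concatenation of $P_{b_e}$ with that arc. All edges of $P_{b_e}$ lie in $T$ and hence are consistent, so for any $z \in V(P_{b_e})$ the label identity from the proof of Lemma~\ref{lem:tree}-(2) gives $\psi_T(z)\cdot\psi_G(Q_w[z,b_e]) = \psi_T(b_e)$. Thus it suffices to show $\psi_T(b_e)\cdot\psi_G(Q_w[b_e,w]) \neq \psi_T(w)$, i.e. that the arc of $C_e$ from $b_e$ to $w$ through $e$ is unorthodox. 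If it were orthodox, then, since the other arc (from $b_e$ to $w$ within $T$) is orthodox by consistency, composing them would make the whole cycle $C_e$ zero, contradicting that $C_e$ is a blossom (a non-zero cycle, as $e$ is inconsistent). Hence~(1).

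For property~(2), fix $z \in V(P_{b_e})$ and an unorthodox $z$--$w$ path $R$. I would argue by contradiction: suppose $\ell(R) < \ell(Q_w[z,w])$. First reduce to $z = b_e$: since $Q_w[z,b_e] = P_{b_e}[z,b_e]$ is a shortest $z$--$b_e$ path, by Lemma~\ref{lem:tree}-(1) we have $\ell(Q_w[z,w]) = \dist_T(b_e) - \dist_T(z) + \ell(Q_w[b_e,w])$, and similarly any $z$--$w$ path splits its length at least as $\dist_T(z)$ matters; more carefully, prepending $P_{b_e}[z,b_e]$ reversed is not what we want, so instead I will directly use the walk $W := P_z * R$ where $P_z$ is the tree path $s$ to $z$ — this is an $s$--$w$ walk. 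Its label is $\psi_T(z)\cdot\psi_G(R) \neq \psi_T(w)$ since $R$ is unorthodox and $P_z \subseteq T$ is consistent, so $W$ is an unorthodox $s$--$w$ walk. The length of $W$ is $\dist_T(z) + \ell(R)$. Now $W$ contains an unorthodox subpath from $s$ to $w$ (peeling off cycles keeps at least one subpath unorthodox by a parity/label argument on the cycle labels, but more simply: $W$ traverses some inconsistent edge by Lemma~\ref{lem:tree}-(2) applied to the $s$--$w$ walk — here one must be slightly careful as Lemma~\ref{lem:tree}-(2) is stated for paths; I would first extract an unorthodox $s$--$w$ subpath of $W$, which exists since $\psi_G(W)$ records an unorthodox composite, by the standard argument that if every subpath between consecutive repeated-vertex reductions were orthodox the whole walk would be orthodox). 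Let $e' = \{u',v'\} \in E \setminus T$ be an inconsistent edge on this subpath. Then $C_{e'}$ is a blossom, and its height is $\tfrac12(\dist_T(u') + \dist_T(v') + \ell(e'))$. I will bound this height: the subpath of $W$ reaching $w$ passes through $e'$, and I want to show height of $C_{e'}$ is strictly less than height of $C_e = \tfrac12(\dist_T(u)+\dist_T(v)+\ell(e))$, contradicting that $C_e$ is lowest. This is where the inequality $\ell(R) < \ell(Q_w[z,w])$ must be converted into a height bound; the natural route is to note $\ell(Q_w[b_e,w])$ equals half of $\ell(W_e) = 2\cdot(\text{height of }C_e)$ minus $\dist_T(b_e)$ along one arc, and to compare with the analogous quantity for $e'$ on the shorter walk $W$. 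Getting these bookkeeping inequalities exactly right — in particular handling the position of $w$ on the arc and the stem length $\dist_T(b_{e'})$ — is the main technical obstacle, and I expect it to require the triangle-type inequalities of Lemma~\ref{lem:tree}-(1) applied repeatedly.

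For property~(3), combine (1) and (2) with $z = s$ (note $s \in V(P_{b_e})$ since $P_{b_e}$ starts at $s$): (1) says $Q_w = Q_w[s,w]$ is unorthodox, and (2) says every unorthodox $s$--$w$ path has length at least $\ell(Q_w)$. Hence $Q_w$ is a shortest unorthodox $s$--$w$ path, which is exactly (3).
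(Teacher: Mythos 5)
Your arguments for (1) and (3) are correct and essentially coincide with the paper's. For (1), the paper writes the same thing more compactly as $\psi_G(Q_w[z,w]) = \psi_G(P_z)^{-1}\cdot\psi_G(W_e)^{\pm1}\cdot\psi_G(P_w)$, using $\psi_G(W_e)\neq\id$; your ``split the cycle into two $b_e$--$w$ arcs'' reasoning is an equivalent restatement. For (3), combining (1) and (2) with $z=s$ is exactly the paper's one line.

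The gap is in (2), and you name it yourself: you never actually close the inequality chain. Moreover, the route you choose is more convoluted than it needs to be, in three ways. First, the contradiction framing (``suppose $\ell(R)<\ell(Q_w[z,w])$'') is unnecessary; one can bound $\ell(R)$ from below directly. Second, prepending $P_z$ to form the walk $W := P_z * R$ and then worrying about extracting an unorthodox \emph{subpath} of $W$ is a self-imposed complication: $R$ itself is already an unorthodox $z$--$w$ \emph{path}, so Lemma~\ref{lem:tree}-(2) applies to $R$ directly and yields an inconsistent edge $f=\{x,y\}\in E(R)\setminus T$. Third, you do not need to pass through ``heights'' or compare $h(C_{e'})$ to $h(C_e)$ at all; the relevant lowest-blossom inequality is simply $\dist_T(x)+\dist_T(y)+\ell(f)\geq\dist_T(u)+\dist_T(v)+\ell(e)$, used once. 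The paper's (2) is the four-line direct chain:
\begin{align*}
  \ell(R) &= \ell(R[z,x]) + \ell(f) + \ell(R[y,w])\\
  &\geq \bigl(\dist_T(x)-\dist_T(z)\bigr) + \ell(f) + \bigl(\dist_T(y)-\dist_T(w)\bigr) && \text{(Lemma~\ref{lem:tree}-(1))}\\
  &\geq \dist_T(u)+\dist_T(v)+\ell(e) - \dist_T(z) - \dist_T(w) && \text{(lowest blossom)}\\
  &= \ell(W_e) - \ell(P_z) - \ell(P_w) = \ell(Q_w) - \ell(P_z) = \ell(Q_w[z,w]).
\end{align*}
Note that $\dist_T(b_e)$ and the position of $w$ on $C_e$ never need to be tracked explicitly, precisely the bookkeeping you flagged as ``the main technical obstacle.'' So your instinct that Lemma~\ref{lem:tree}-(1),(2) plus the lowest-height assumption are the right ingredients is correct, but the proposal as written does not constitute a proof of (2); the missing step is to apply Lemma~\ref{lem:tree}-(2) to $R$ itself and chain the inequalities directly rather than via a contradiction on blossom heights.
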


\begin{proof}
$(1)$~
By definition, $Q_w * \overline{P_w}$ coincides with either $W_e = P_u * (u, e, v) * \overline{P_v}$ itself or the reversed walk $\overline{W_e}$. Since $Q_w[s, z] = P_z$ and $\psi_G(W_e) = \psi_T(u) \cdot \psi_G(e, uv) \cdot \psi_T(v)^{-1} \neq \id$, we have
\begin{align*}
\psi_G(Q_w[z, w]) = \psi_G(P_z)^{-1} \cdot \psi_G(W_e)^{\pm 1} \cdot \psi_G(P_w)
\neq \psi_G(P_z)^{-1} \cdot \id \cdot \psi_G(P_w)
= \psi_G(P_w[z, w]).
\end{align*}

\noindent
$(2)$~
By Lemma~\ref{lem:tree}(2), there exists an inconsistent edge $f = \{x, y\} \in E(R) \setminus T$,
and we can write $R = R[z, x] * (x, f, y) * R[y, w]$ (possibly, $x = z$ or $y = w$).
By Lemma~\ref{lem:tree}(1), we have $\ell(R[z, x]) \geq \dist_T(x) - \dist_T(z)$ and $\ell(R[y, w]) \geq \dist_T(y) - \dist_T(w)$.
Since $C_e$ is a lowest blossom, we have
\begin{align*}
  \ell(R) &= \ell(R[z, x]) + \ell(f) + \ell(R[y, w])\\
  &\geq \dist_T(x) + \dist_T(y) + \ell(f) - \dist_T(w) - \dist_T(z)\\
  &\geq \dist_T(u) + \dist_T(v) + \ell(e) - \dist_T(w) - \dist_T(z)\\
  &= \ell(W_e) - \ell(P_w) - \ell(P_z)\\ &= \ell(Q_w) - \ell(P_z) = \ell(Q_w[z, w]).
\end{align*}

\noindent
$(3)$~
Just combine (1) and (2) by choosing $z = s$.
\end{proof}

Fix a lowest blossom $C$ with base $b$,
and let $Q_w$ denote the shortest unorthodox $s$--$w$ path defined in Lemma~\ref{lem:second} for each $w \in V(C) \setminus \{b\}$.
If $t \in V(C) \setminus \{b\}$, then $Q_t$ is a desired $s$--$t$ path, and we are done.
Otherwise, we shrink $C$ into $b$,
and recursively find a shortest unorthodox $s$--$t$ path in the resulting graph,
which can be expanded into a shortest unorthodox $s$--$t$ path in $G$.

The shrinking and expanding operations are formally described in Definitions~\ref{def:shrink} and \ref{def:expand}.
Intuitively, for each vertex $w \in V(C) \setminus \{b\}$ (which is removed by shrinking), we care about the only two $b$--$w$ paths along $C$, say $R_{b, w}^1 = P_w[b, w]$ and $R_{b, w}^2 = Q_w[b, w]$, and for each edge $f = \{w, x\} \in \delta_G(w)$ with $x \not\in V(C)$, we create two new edges $\tf_i = \{b, x\}$ $(i = 1, 2)$ corresponding to two $b$--$x$ paths $R_{b, w}^i * (w, f, x)$ in $G$ (see Fig.~\ref{fig:shrink}).

\begin{definition}[Shrinking]\label{def:shrink}
For an $s$-SPT $T$ of $(G, \ell)$ and a lowest blossom $C$ with base $b$,
we say that a $\Gamma$-labeled graph $\tG = (\tV, \tE)$ with edge lengths $\tell \in \RR_{\geq 0}^\tE$ and a spanning tree $\tT \subseteq \tE$ (cf.~Lemma~\ref{lem:shrink}(3)) is obtained by \emph{shrinking $C$ into $b$}
(and denote $\tilde{\bullet}$ by $\bullet[C \to b]$ for $\bullet \in \{G, \ell, T\}$) if it is defined as follows (see Fig.~\ref{fig:shrink}).
\begin{itemize}
  \setlength{\itemsep}{.5mm}
\item $\tV \coloneqq V \setminus (V(C) \setminus \{b\})$.
\item $\tE \coloneqq (E \setminus E_{C, b}) \cup \tF_{C, b}$ and $\tilde{T} \coloneqq (T \setminus E_{C, b}) \cup \{\, \tf_1 \mid f \in T \cap (E_{C, b} \setminus E(C)) \,\}$, where
\begin{align*}
  E_{C, b} &\coloneqq \{\, e \in E \mid e \cap (V(C) \setminus \{b\}) \neq \emptyset \,\},\\
  \tF_{C, b} &\coloneqq \{\, \tf_i = \{b, x\} \mid f = \{w, x\} \in E_{C, b},~w \in V(C) \setminus \{b\},~x \not\in V(C),~i \in \{1, 2\} \,\}.
\end{align*}
\item The labels and lengths are defined as follows.\vspace{-1mm}
\begin{itemize}
\setlength{\itemsep}{1mm}
\item For each $e = \{u, v\} \in E \setminus E_{C, b}$, define $\psi_\tG(e, uv) \coloneqq \psi_G(e, uv)$ and $\tell(e) \coloneqq \ell(e)$.
\item For $\tf_i = \{b, x\} \in \tF_{C, b}$ with $f = \{w, x\}$,
  define $\psi_\tG(\tf_i, bx) = \psi_G(R_{b, w}^i) \cdot \psi_G(f, wx)$ and $\tell(\tf_i) \coloneqq \ell(R_{b, w}^i) + \ell(f)$,
  where $R_{b, w}^1 \coloneqq P_w[b, w]$ and $R_{b, w}^2 \coloneqq Q_w[b, w]$.
\end{itemize}
\end{itemize}
\end{definition}

\begin{figure}[tbp]
 \begin{center}
  \includegraphics[scale=0.8]{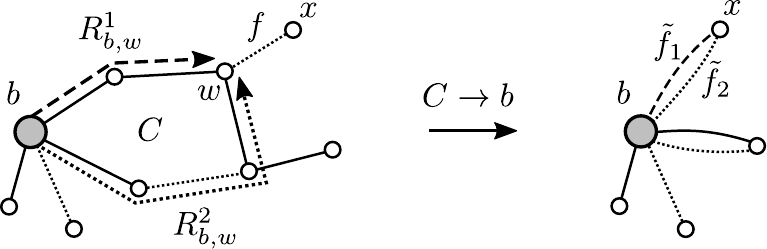}
 \end{center}\vspace{-3mm}
 \caption{Shrinking $C$ into $b$, where solid lines represent edges in $T$ and in $\tT$, respectively.}
 \label{fig:shrink}
\end{figure}

\begin{definition}[Expanding]\label{def:expand}
Under the same setting as Definition~\ref{def:shrink}, let $v \in \tV$.
For an \mbox{$s$--$v$} path $\tR_v$ in $\tG$, we say that an $s$--$v$ walk $R_v$ in $G$ is obtained by \emph{expanding $b$ into $C$}
(and denote $R_v$ by $\tR_v[b \to C]$) if it is obtained from $\tR_v$ as follows
(see also Fig.~\ref{fig:shrink}):
for each $\tf_i = \{b, x\} \in E(\tR_v) \cap \tF_{C, b}$ with $f = \{w, x\} \in E_{C, b}$ and $i \in \{1, 2\}$,
replace the subpath $(b, \tf_i, x)$ or $(x, \tf_i, b)$ appearing in $\tR_v$ with the corresponding path $R_{b, w}^i * (w, f, x)$ or $(x, f, w) * \overline{R_{b, w}^i}$ in $G$, respectively.
\end{definition}

We observe basic properties on these operations.

\begin{lemma}\label{lem:shrink}
The following properties hold in Definitions~$\ref{def:shrink}$ and $\ref{def:expand}$.
\begin{itemize}
  \setlength{\itemsep}{.5mm}
\item[$(1)$]
  $\ell(R_v) = \tell(\tR_v)$ and $\psi_G(R_v) = \psi_\tG(\tR_v)$.
\item[$(2)$]
  If $|E(\tR_v) \cap \tF_{C, b}| \leq 1$, then $R_v$ is an $s$--$v$ path.
  Otherwise, $|E(\tR_v) \cap \tF_{C, b}| = 2$ holds and each vertex in $\tV = V \setminus (V(C) \setminus \{b\})$ appears in $R_v$ at most once. 
\item[$(3)$]
$\tT$ is an $s$-SPT of $(\tG, \tell)$ such that, for every $v \in \tV$, we have $\dist_\tT(v) = \dist_T(v)$ and $\psi_\tT(v) = \psi_T(v)$,
and a unique $s$--$v$ path $\tP_v$ in $\tT$ is expanded into $P_v$, i.e., $\tP_v[b \to C] = P_v$.
\end{itemize}
\end{lemma}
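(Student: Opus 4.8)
The plan is to verify the three claims in order, since each relies on the previous ones. For part~(1), I would argue locally: the expansion $\tR_v \mapsto R_v$ only touches the edges of $\tR_v$ lying in $\tF_{C,b}$, and by Definition~\ref{def:expand} each such edge $\tf_i=\{b,x\}$ (traversed in either direction) is replaced by the walk $R_{b,w}^i*(w,f,x)$ (or its reverse). By the very definition of $\tell(\tf_i)$ and $\psi_\tG(\tf_i,bx)$ in Definition~\ref{def:shrink}, the length and label contributions of this piece are preserved exactly; all other edges of $\tR_v$ lie in $E\setminus E_{C,b}$ and keep their original length and label. Summing over the edges of $\tR_v$ and using $\ell(R_v)=\sum_{e\in E(R_v)}\ell(e)$ and the multiplicativity $\psi_G(W*W')=\psi_G(W)\cdot\psi_G(W')$ gives both identities. (One minor care: when $\tf_i$ is traversed from $x$ to $b$, the replacement is $(x,f,w)*\overline{R_{b,w}^i}$, whose label is $\psi_G(f,xw)\cdot\psi_G(R_{b,w}^i)^{-1}=\bigl(\psi_G(R_{b,w}^i)\cdot\psi_G(f,wx)\bigr)^{-1}=\psi_\tG(\tf_i,bx)^{-1}=\psi_\tG(\tf_i,xb)$, as required.)

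For part~(2), the point is that the new edges in $\tF_{C,b}$ all have $b$ as one endpoint, so a path $\tR_v$ in $\tG$ uses at most two of them, and it uses exactly two only if $b$ is an interior vertex of $\tR_v$, in which case $\tR_v$ enters and leaves $b$ along two such edges $\tf_i=\{b,x\}$, $\tf_j'=\{b,x'\}$ with $x,x'\notin V(C)$. If $|E(\tR_v)\cap\tF_{C,b}|\le 1$, the single inserted detour $R_{b,w}^i*(w,f,x)$ runs along $C$ from $b$ to $w$ and then leaves $C$ forever; since the rest of $\tR_v$ avoids $V(C)\setminus\{b\}$ entirely, no vertex is repeated, so $R_v$ is a path. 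If $|E(\tR_v)\cap\tF_{C,b}|=2$, write $\tR_v=\tR_v[s,b]*(b,\tf_i,x)*\cdots*(x',\tf_j',b)$ reordered so that $b$ is interior; after expansion the two detours are $R_{b,w}^i$ from $b$ into $C$ and $\overline{R_{b,w'}^j}$ from $C$ back to $b$. The two detours $R_{b,w}^1=P_w[b,w]$ and $R_{b,w}^2=Q_w[b,w]$ both lie inside $V(C)$, and together they sweep out at most the whole cycle $C$; the key observation is that $R_{b,w}^1$ and $R_{b,w'}^2$ traverse $C$ in the two opposite arcs from $b$, so they meet only at $b$ — hence each vertex of $V(C)$ is used at most once across the two detours, and vertices outside $V(C)$ are used at most once because $\tR_v$ is a path avoiding $\tF_{C,b}$ elsewhere. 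Thus every vertex of $\tV$ appears in $R_v$ at most once. This is the step I expect to require the most care: one must check that the ``outgoing'' detour and the ``returning'' detour really do lie on complementary arcs of $C$ and do not accidentally overlap, and that the vertex $b$ itself is visited exactly once in $R_v$; getting the directions and the two cases $i\in\{1,2\}$ straight is the main bookkeeping hurdle.

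For part~(3), I would first check that for $v\in\tV\setminus\{b\}$ whose $T$-path $P_v$ avoids $V(C)\setminus\{b\}$, nothing changes: $P_v\subseteq T\setminus E_{C,b}\subseteq\tT$, so $\tP_v=P_v$, $\dist_\tT(v)=\dist_T(v)$, $\psi_\tT(v)=\psi_T(v)$. For $v$ whose $T$-path passes through $C$, note $P_v$ must enter $C$ at the base $b$ (since $b$ is the unique vertex of $C$ closest to $s$ along $T$, by Definition~\ref{def:canonical}), leave $C$ at some $w\in V(C)\setminus\{b\}$ along an edge $f=\{w,x\}\in E_{C,b}$ with $x\notin V(C)$, and thereafter avoid $V(C)\setminus\{b\}$; the portion of $P_v$ inside $C$ is exactly $P_w[b,w]=R_{b,w}^1$. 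Hence $P_v=P_b*R_{b,w}^1*(w,f,x)*(\text{tail})$, and replacing the first two pieces by the single edge $\tf_1=\{b,x\}\in\tT$ yields the $\tT$-path $\tP_v$ with $\tP_v[b\to C]=P_v$; by part~(1), $\tell(\tP_v)=\ell(P_v)=\dist_T(v)$ and $\psi_\tT(v)=\psi_\tG(\tP_v)=\psi_G(P_v)=\psi_T(v)$. It remains to argue $\tT$ is an $s$-SPT of $(\tG,\tell)$, i.e., $\tP_v$ is shortest in $\tG$ for every $v\in\tV$: given any $s$--$v$ path $\tS$ in $\tG$, its expansion $S=\tS[b\to C]$ is, by part~(2), an $s$--$v$ walk in $G$ with $\ell(S)=\tell(\tS)$, so $\tell(\tS)=\ell(S)\ge\dist_G(s,v)=\ell(P_v)=\tell(\tP_v)$ using that $P_v$ is shortest in $G$ and that the length of any walk is at least that of a shortest path between its endpoints. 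Finally, $\tT$ is a spanning tree of $\tG$ because contracting the edges of $C$ inside the spanning tree $T$ (equivalently, deleting $V(C)\setminus\{b\}$ together with their $T$-edges and keeping the $\tf_1$-copies of the remaining $T$-edges) leaves a connected, acyclic spanning subgraph on $\tV$; this is the routine topological verification alluded to by the ``cf.'' in Definition~\ref{def:shrink}.
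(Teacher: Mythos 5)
Your overall approach — direct verification from Definitions~\ref{def:shrink} and~\ref{def:expand} — matches the paper's, which treats~(1) and~(2) as immediate and gives only a brief sketch of~(3). Your write-ups of~(1) and~(3) are correct (modulo a slip in~(3): you say ``replacing the \emph{first} two pieces'' of $P_v = P_b * R_{b,w}^1 * (w,f,x) * (\text{tail})$, but what gets replaced by $\tf_1$ is $R_{b,w}^1 * (w,f,x)$, the middle two pieces; $P_b$ is retained).

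There is, however, a genuine flaw in your argument for~(2). You identify as ``the key observation'' that the two expanded detours $R_{b,w}^i$ and $R_{b,w'}^j$ traverse the two opposite arcs of $C$ from $b$ and therefore meet only at $b$, from which you conclude that every vertex of $V(C)$ is used at most once. This claim is false in general: the two detours need not be of different types (you can have $i = j = 1$, in which case both are prefixes of the same tree-arc of $C$ and may overlap heavily), and even with $i = 1$, $j = 2$, the type-$2$ detour $Q_{w'}[b,w']$ is not a single arc of $C$ — it traverses the entire opposite arc, the edge $e$, and then part of the arc containing $w'$, so it can overlap $R_{b,w}^1$ in several vertices of $V(C) \setminus \{b\}$. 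Fortunately, no such disjointness is needed. The lemma only claims that vertices of $\tV = V \setminus (V(C) \setminus \{b\})$ appear at most once, and repeated vertices inside $V(C) \setminus \{b\}$ are explicitly permitted. Since each detour $R_{b,w}^i$ is a $b$--$w$ subpath of the simple cycle $C$, the only vertex of $\tV$ it contains is $b$, and $b$ occurs only as an endpoint (never in the interior) of each detour; moreover $b$ occurs exactly once in the path $\tR_v$, namely at the junction between the two edges of $\tF_{C,b}$. Hence $b$ appears exactly once in $R_v$, and all other vertices of $\tV$ appear at most once because they occur only on the untouched portion of $\tR_v$, which is a path. This simpler argument is all that is required; the disjointness of the two detours inside $C$ is both unnecessary and unavailable.
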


\begin{proof}
The properties $(1)$ and $(2)$ immediately follow from the definitions (see $\psi_\tG$ and $\tell$, and note that $\tR_v$ is chosen as an $s$--$v$ path in $\tG$).
In what follows, we prove $(3)$.

Since $T$ is a spanning tree of $G$ and $C$ is a cycle with $|E(C) \setminus T| = 1$, we see that $\tT$ is indeed a spanning tree of $\tG = G[C \to b]$, in which $|E(\tP_v) \cap \tF_{C, b}| \leq 1$ and $\tP_v[b \to C] = P_v$ for each $v \in \tV$ by definition.
By the property (1), we have $\dist_\tT(v) = \dist_T(v)$ and $\psi_\tT(v) = \psi_T(v)$.
Moreover, for any $s$--$v$ path $\tR_v$ in $\tG$, there exists a corresponding $s$--$v$ walk $R_v = \tR_v[b \to C]$ in $G$ such that $\ell(R_v) = \tell(\tR_v)$, which implies the existence of a corresponding $s$--$v$ path $R'_v$ in $G$ with $\ell(R'_v) \leq \ell(R_v) = \tell(\tR_v)$ by the nonnegativity of $\ell$.
Since $P_v$ is shortest in $G$, so is $\tP_v$ in $\tG$.
\end{proof}

Whereas we want to obtain a shortest unorthodox path in the original graph $G$, the expanding operation does not necessarily yield a path by definition.\footnote{For the concept of expanding, two different definitions that always yield a path were employed in the submitted and final versions of SODA 2020, but both of them turned out to fail sometimes. Thus we have modified them in the present way by separating expansion and simplification. See also Remark after Definition~\ref{def:simplify}.}
In order to complete our recursive algorithm (and also to utilize it in the correctness proof), we define one more operation to obtain an unorthodox $s$--$v$ path in $G$ by simplifying an unorthodox $s$--$v$ walk in $G$ in which each vertex in $V \setminus (V(C) \setminus \{b\})$ appears at most once (cf.~Lemma~\ref{lem:shrink}(2)).

\begin{definition}[Simplification]\label{def:simplify}
Under the same setting as Definitions~\ref{def:shrink} and \ref{def:expand}, let $R_t$ be an unorthodox $s$--$t$ walk in $G$ in which each vertex in $V \setminus (V(C) \setminus \{b\})$ appears at most once.
We then define the \emph{simplification} of $R_t$ as an unorthodox $s$--$t$ path $R'_t$ with $E(R'_t) \subseteq E(R_t) \cup E(C) \cup E(P_b)$ as follows (see Figs.~\ref{fig:case_1}--\ref{fig:case_2_2}).
If $V(R_t) \cap (V(C) \setminus \{b\}) = \emptyset$, then let $R'_t \coloneqq R_t$, which is indeed a path by the assumption.
Otherwise, let $w$ be the last vertex on $R_t$ intersecting the blossom $C$ or its stem $P_b$, so that $R_t[w, t]$ is a path with $V(R_t[w, t]) \cap (V(C) \cup V(P_b)) = \{w\}$.
\begin{description}
\setlength{\itemsep}{0mm}
  \item[Case~1.] When $w \in V(C) \setminus \{b\}$ (see Fig.~\ref{fig:case_1}). \vspace{-.5mm}
  \begin{description}
  \setlength{\itemsep}{.5mm}
    \item[Case~1.1.] If $\psi_G(P_w) \cdot \psi_G(R_t[w, t]) \neq \psi_T(t)$, then let $R'_t \coloneqq P_w * R_t[w, t]$.
    \item[Case~1.2.] Otherwise, let $R'_t \coloneqq Q_w * R_t[w, t]$.
  \end{description}
 
\begin{figure}[tb]
 \begin{center}
  \includegraphics[scale=0.8]{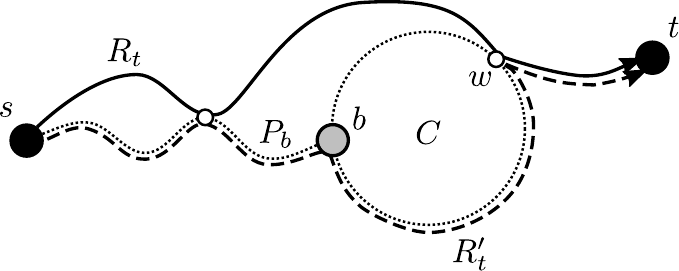}
 \end{center}\vspace{-3mm}
 \caption{General picture of Case~1 (when $w \in V(C) \setminus \{b\}$), where the solid line represents $R_t$, the dashed one is $R'_t$ with $R'_t[s, w] = P_w$ or $Q_w$, and the dotted ones are $P_b$ and $C$.}
 \label{fig:case_1}
\end{figure}
 
  \item[Case~2.] When $w \in V(P_b)$ (see Figs.~\ref{fig:case_2_1} and \ref{fig:case_2_2}).
  
  Let $u$ and $v$ be the first and last vertices, respectively, on $R_t$ intersecting $V(C) \setminus \{b\}$, so that $R_t[s, u]$ and $R_t[v, t]$ are paths with $V(R_t[s, u]) \cap (V(C) \setminus \{b\}) = \{u\}$ and $V(R_t[v, t]) \cap (V(C) \setminus \{b\}) = \{v\}$ (possibly $u = v$).
  Let $y_1 \in V(R_t[s, u]) \cap V(P_b)$ and $y_2 \in V(R_t[v, t]) \cap V(P_b)$ be respectively the farthest vertices on $P_b$ from $s$, so that $V(R_t[s, u]) \cap V(P_b[y_1, b]) = \{y_1\}$ and $V(R_t[v, t]) \cap V(P_b[y_2, b]) = \{y_2\}$ (possibly, $y_1 = s$ or $y_2 = w$).
  As $y_1, y_2 \in V \setminus (V(C) \setminus \{b\})$, we have $y_1 \neq y_2$ by the assumption on $R_t$, and let $y$ be the farther one, i.e., $y \coloneqq y_1$ if $y_1 \in V(P_b[y_2, b])$ and $y \coloneqq y_2$ otherwise (then $y_2 \in V(P_b[y_1, b])$).
  \begin{description}
    \setlength{\itemsep}{.5mm}
    \item[Case~2.1.] When $y = y_1$ (see Fig.~\ref{fig:case_2_1}).
      \begin{description}
      \setlength{\itemsep}{1mm}
      \item[Case~2.1.1] If $\psi_G(R_t[s, y]) \cdot \psi_G(P_v[y, v]) \cdot \psi_G(R_t[v, t]) \neq \psi_T(t)$, then let $R'_t \coloneqq R_t[s, y] * P_v[y, v] * R_t[v, t]$.
      \item[Case~2.1.2.] Otherwise, let $R'_t \coloneqq R_t[s, y] * Q_v[y, v] * R_t[v, t]$.
    \end{description}
    \item[Case~2.2.] When $y = y_2$ (see Fig.~\ref{fig:case_2_2}).
    \begin{description}
    \setlength{\itemsep}{1mm}
      \item[Case~2.2.1.] If $\psi_G(R_t[s, u]) \cdot \psi_G(\overline{P_u}[u, y]) \cdot \psi_G(R_t[y, t]) \neq \psi_T(t)$, then let $R'_t \coloneqq R_t[s, u] * \overline{P_u}[u, y] * R_t[y, t]$.
      \item[Case~2.2.2.] Otherwise, let $R'_t \coloneqq R_t[s, u] * \overline{Q_u}[u, y] * R_t[y, t]$.
    \end{description}
  \end{description}
\end{description}
\end{definition}

\begin{remark}
We remark why we have to employ this complex definition instead of a rather simple approach (which was actually employed in the first manuscript submitted to SODA 2020) as follows.
Let $\tilde{R}_t$ be an unorthodox $s$--$t$ path in $\tilde{G}$ that is expanded to $R_t$ in the sense of Definition~\ref{def:expand}.
By Lemma~\ref{lem:shrink}, if $|E(\tilde{R}_t) \cap \tilde{F}_{C, b}| \leq 1$, then $R_t$ is indeed a desired path (which is unorthodox and is not longer than $\tilde{R}_t$), and otherwise $|E(\tilde{R}_t) \cap \tilde{F}_{C, b}| = 2$.
If the two corresponding edges in $E_{C, b}$ (cf.~Definition~\ref{def:shrink}) have different endpoints in $C$, say $u$ and $w$, then we have two possible choices of $u$--$w$ subpaths in $C$ whose labels are different (as $C$ is non-zero), which enable us to choose one of them even if $R_t[u, w]$ is not a path but a walk including $C$.
However, it may happen that the two edges have the same endpoints in $C$ and $R_t$ includes $C$.
In this case, to obtain an $s$--$t$ path in $G$, we must remove $C$ from $R_t$, which changes the label into an undesirable one (as $C$ is non-zero).
\end{remark}


\begin{lemma}\label{lem:simplify}
$R'_t$ defined in Definition~$\ref{def:simplify}$ is an $s$--$t$ path in $G$ with $\psi_G(R'_t) \neq \psi_T(t)$ and $\ell(R'_t) \leq \ell(R_t)$.
\end{lemma}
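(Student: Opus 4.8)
The plan is to follow the case division of Definition~\ref{def:simplify} and, in each branch, verify the three required assertions separately: that $R'_t$ is a path (no repeated vertex), that $\psi_G(R'_t) \neq \psi_T(t)$, and that $\ell(R'_t) \leq \ell(R_t)$. Before that I would record one small but convenient fact: Lemmas~\ref{lem:tree} and \ref{lem:second} remain valid with ``walk'' in place of ``path'' throughout, since their proofs use only telescoping of labels and summation of edge lengths along the sequence, never simplicity. In particular, for any unorthodox $x$--$y$ walk $R$ in $G$ we may use $\ell(R) \geq |\dist_T(x) - \dist_T(y)|$, the existence of an inconsistent edge of $E \setminus T$ on $R$, and, when $x \in V(P_b)$ and $y \in V(C) \setminus \{b\}$, the bound $\ell(R) \geq \ell(Q_y[x, y])$. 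These are exactly the estimates needed to compare subwalks of $R_t$ with the tree detours spliced into $R'_t$.

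In the trivial case $V(R_t) \cap (V(C) \setminus \{b\}) = \emptyset$, the walk $R_t$ is itself a path by the hypothesis on $R_t$, and $R'_t = R_t$ settles everything. In Case~1, both $P_w$ and $Q_w$ have all their vertices in $V(C) \cup V(P_b)$, while $R_t[w,t]$ meets $V(C) \cup V(P_b)$ only at its initial vertex $w$; hence $R'_t = P_w * R_t[w,t]$ or $Q_w * R_t[w,t]$ is a path. Unorthodoxy is immediate in Case~1.1 from the defining inequality together with $\psi_G(P_w) = \psi_T(w)$; in Case~1.2 one first uses that $R_t$ is unorthodox and the negated inequality to cancel $\psi_G(R_t[w,t])$ and deduce that $R_t[s,w]$ is an unorthodox $s$--$w$ walk, and then $\psi_G(Q_w) \neq \psi_T(w)$ (Lemma~\ref{lem:second}-(1) with $z = s$) yields $\psi_G(R'_t) \neq \psi_T(t)$. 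For the length, $\ell(P_w) = \dist_T(w) \leq \ell(R_t[s,w])$ handles Case~1.1, and in Case~1.2 the walk version of Lemma~\ref{lem:second}-(2) applied to the unorthodox walk $R_t[s,w]$ gives $\ell(Q_w) \leq \ell(R_t[s,w])$.

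Case~2 is treated the same way, the only genuinely new work being simplicity of $R'_t$. There we splice the tree sub-path $P_v[y,v]$ or $Q_v[y,v]$ (Case~2.1), resp.\ $\overline{P_u}[u,y]$ or $\overline{Q_u}[u,y]$ (Case~2.2), between $R_t[s,y]$ and $R_t[v,t]$, resp.\ between $R_t[s,u]$ and $R_t[y,t]$. The vertex set of the spliced piece lies in $V(P_b[y,b]) \cup V(C)$; by the extremal choice of $y_1$ and $y_2$ we have $V(R_t[s,u]) \cap V(P_b[y_1,b]) = \{y_1\}$ and $V(R_t[v,t]) \cap V(P_b[y_2,b]) = \{y_2\}$, and in Case~2.1 the other stem endpoint $y_2$ lies strictly closer to $s$ than $y = y_1$ (so $P_b[y,b] \subseteq P_b[y_2,b]$), and symmetrically in Case~2.2. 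Combining these with $V(R_t[s,u]) \cap (V(C)\setminus\{b\}) = \{u\}$, $V(R_t[v,t]) \cap (V(C)\setminus\{b\}) = \{v\}$, and the at-most-once hypothesis on $R_t$ shows that the three pieces forming $R'_t$ are paths and are pairwise vertex-disjoint apart from the prescribed junctions; the degenerate boundary situations ($y_1 = s$, $y_2 = w$, $y_i = b$, $u = v$) are dispatched by the same argument. Unorthodoxy and the length bound are then verbatim analogues of Case~1: in the ``.1'' branches they follow from the defining inequality and $\psi_G(P_v[y,v]) = \psi_T(y)^{-1}\psi_T(v)$, resp.\ $\psi_G(\overline{P_u}[u,y]) = \psi_T(u)^{-1}\psi_T(y)$, together with $\ell(P_v[y,v]) = \dist_T(v) - \dist_T(y) \le \ell(R_t[y,v])$ via the walk version of Lemma~\ref{lem:tree}-(1); in the ``.2'' branches, comparing with the orthodox ``.1'' candidate and cancelling reduces unorthodoxy to $\psi_G(Q_v[y,v]) \neq \psi_T(y)^{-1}\psi_T(v)$, which is Lemma~\ref{lem:second}-(1) with $z = y$, while the same cancellation shows the replaced subwalk $R_t[y,v]$, resp.\ $R_t[u,y]$, is unorthodox, so the walk version of Lemma~\ref{lem:second}-(2) bounds its length below by that of the spliced detour. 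Since Case~2.2 is obtained from Case~2.1 by reversing $R_t$ and swapping the two ends, one of the two can be reduced to the other.

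I expect the simplicity check in Case~2 to be the main obstacle: the label identities and length inequalities are one-line computations once the correct subwalks have been named, but verifying that no vertex is repeated after a portion of $T$ has been spliced into $R_t$ requires carefully exploiting the extremal definitions of $u$, $v$, $y_1$, $y_2$, $y$ together with the ``each vertex outside $V(C)\setminus\{b\}$ occurs at most once'' hypothesis, plus bookkeeping of the several degenerate cases.
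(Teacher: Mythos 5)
Your proposal is correct and follows essentially the same route as the paper's proof: a branch-by-branch verification following Definition~\ref{def:simplify}, with simplicity checked via the extremal choices of $w$, $u$, $v$, $y_1$, $y_2$, unorthodoxy by cancellation against the orthodox candidate plus Lemma~\ref{lem:second}-(1), and the length bound from Lemma~\ref{lem:tree}-(1) or Lemma~\ref{lem:second}-(2). A small improvement over the paper's exposition is that you explicitly record the ``walk'' versions of Lemmas~\ref{lem:tree} and~\ref{lem:second}, which are in fact needed (since subwalks such as $R_t[s,w]$ or $R_t[y,v]$ may repeat vertices of $V(C)\setminus\{b\}$) but are invoked only implicitly in the paper.
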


\begin{figure}[tb]
   \begin{center}
    \includegraphics[scale=0.8]{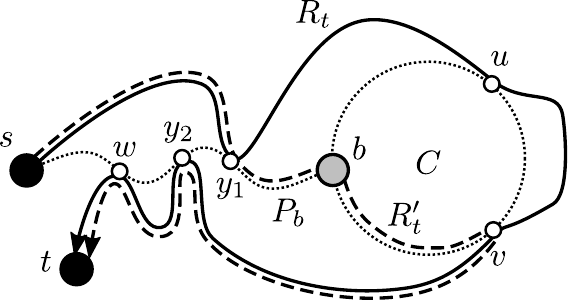}
   \end{center}\vspace{-3mm}
   \caption{General picture of Case~2.1 (when $y = y_1$), where the solid line represents $R_t$, the dashed one is $R'_t$ with $R'_t[y, v] = P_v[y, v]$ or $Q_v[y, v]$, and the dotted ones are $P_b$ and $C$.}\vspace{7mm}
   \label{fig:case_2_1}
\end{figure}

\begin{figure}[tb]
 \begin{center}
  \includegraphics[scale=0.8]{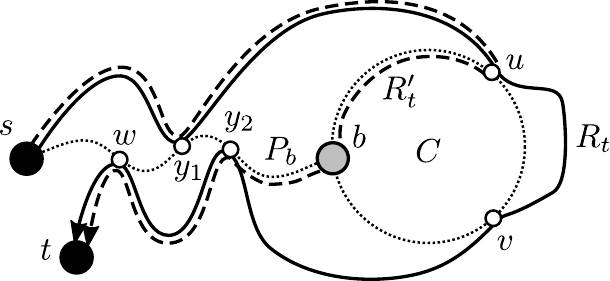}
 \end{center}\vspace{-3mm}
 \caption{General picture of Case~2.2 (when $y = y_2$), where the solid line represents $R_t$, the dashed one is $R'_t$ with $R'_t[u, y] = \overline{P_u}[u, y]$ or $\overline{Q_u}[u, y]$, and the dotted ones are $P_b$ and $C$.}
 \label{fig:case_2_2}
\end{figure}

\begin{proof}
The case when $R'_t = R_t$ is trivial.
We confirm the other six cases separately as follows.

\medskip\noindent
{{\bf Case~1.}~~When $w \in V(C) \setminus \{b\}$ (see Fig.~\ref{fig:case_1}).}

\medskip
\noindent
{\bf Case~1.1.}~
In this case, we have $\psi_G(R'_t)= \psi_G(P_w) \cdot \psi_G(R_t[w, t]) \neq \psi_T(t)$.
As $P_w$ is a shortest $s$--$w$ path in $G$, we see
\begin{align*}
  \ell(R'_t) &= \ell(P_w) + \ell(R_t[w, t])\leq \ell(R_t[s, w]) + \ell(R_t[w, t]) = \ell(R_t).
\end{align*}

\smallskip\noindent
{\bf Case~1.2.}~
In this case, we have $\psi_G(P_w) \cdot \psi_G(R_t[w, t]) = \psi_T(t)$, which implies $R_t[s, w]$ is unorthodox as so is $R_t$.
By Lemma~\ref{lem:second}, we see
\begin{align*}
  \psi_G(R'_t) &= \psi_G(Q_w) \cdot \psi_G(R_t[w, t])= \psi_G(Q_w) \cdot \psi_T(w)^{-1} \cdot \psi_T(t) \neq \psi_T(t),\\[1mm]
  \ell(R'_t) &= \ell(Q_w) + \ell(R_t[w, t]) \leq \ell(R_t[s, w]) + \ell(R_t[w, t]) = \ell(R_t).
\end{align*}

\smallskip\noindent
{{\bf Case~2.1.}~~When $w \in V(P_b)$ and $y = y_1$ (see Fig.~\ref{fig:case_2_1}).}

\smallskip
In this case, $V(R_t[s, y]) \cap (V(C) \cup V(P_b[y, b])) = \{y\}$ and $V(R_t[v, t]) \cap (V(C) \cup V(P_b[y, b])) = \{v\}$.
Hence, $R'_t$ is indeed an $s$--$t$ path because $R'_t[y, v]$ is a subpath of $P_b[y, b] * C$ or $P_b[y, b] * \overline{C}$.

\medskip\noindent
{\bf Case~2.1.1.}~
In this case, we have $\psi_G(R'_t) = \psi_G(R_t[s, y]) \cdot \psi_G(P_v[y, v]) \cdot \psi_G(R_t[v, t]) \neq \psi_T(t)$.
As $P_v = P_y * P_v[y, v]$, by Lemma~\ref{lem:tree}(1), we see
\begin{align*}
  \ell(R'_t) &= \ell(R_t[s, y]) + \left(\dist_T(v) - \dist_T(y)\right) + \ell(R_t[v, t])\\ &\leq \ell(R_t[s, y]) + \ell(R_t[y, v]) + \ell(R_t[v, t]) = \ell(R_t).
\end{align*}

\noindent
{\bf Case~2.1.2.}~
In this case, we have $\psi_G(R_t[s, y]) \cdot \psi_G(P_v[y, v]) \cdot \psi_G(R_t[v, t]) = \psi_T(t)$, which implies $R_t[y, v]$ is unorthodox as so is $R_t$.
As $y \in V(P_b)$ and $v \in V(C) \setminus \{b\}$, by Lemma~\ref{lem:second}, we see
\begin{align*}
  \psi_G(R'_t) &= \psi_G(R_t[s, y]) \cdot \psi_G(Q_v[y, v]) \cdot \psi_G(R_t[v, t])\\
   &\neq \psi_G(R_t[s, y]) \cdot \psi_G(P_v[y, v]) \cdot \psi_G(R_t[v, t]) = \psi_T(t),\\[1mm]
  \ell(R'_t) &= \ell(R_t[s, y]) + \ell(Q_v[y, v]) + \ell(R_t[v, t])\\
  &\leq \ell(R_t[s, y]) + \ell(R_t[y, v]) + \ell(R_t[v, t]) = \ell(R_t).
\end{align*}

\smallskip
\noindent
{{\bf Case~2.2.}~~When $w \in V(P_b)$ and $y = y_2$ (see Fig.~\ref{fig:case_2_2}).}

\smallskip
In this case, $V(R_t[s, u]) \cap (V(C) \cup V(P_b[y, b])) = \{u\}$ and $V(R_t[y, t]) \cap (V(C) \cup V(P_b[y, b])) = \{y\}$.
Hence, $R'_t$ is indeed an $s$--$t$ path because $R'_t[u, y]$ is a subpath of $C * \overline{P_b}[b, y]$ or $\overline{C} * \overline{P_b}[b, y]$.

\medskip\noindent
{\bf Case~2.2.1.}~
In this case, we have $\psi_G(R'_t) = \psi_G(R_t[s, u]) \cdot \psi_G(\overline{P_u}[u, y]) \cdot \psi_G(R_t[y, t]) \neq \psi_T(t)$.
As $\overline{P_u} = \overline{P_u}[u, y] * \overline{P_y}$, by Lemma~\ref{lem:tree}(1), we see
\begin{align*}
  \ell(R'_t) &= \ell(R_t[s, u]) + \left(\dist_T(u) - \dist_T(y)\right) + \ell(R_t[y, t])\\
  &\leq \ell(R_t[s, u]) + \ell(R_t[u, y]) + \ell(R_t[y, t])= \ell(R_t).
\end{align*}

\noindent
{\bf Case~2.2.2.}~
In this case, we have $\psi_G(R_t[s, u]) \cdot \psi_G(\overline{P_u}[u, y]) \cdot \psi_G(R_t[y, t]) = \psi_T(t)$, which implies $\overline{R_t}[y, u]$ is unorthodox as so it $R_t$.
As $y \in V(P_b)$ and $u \in V(C) \setminus \{b\}$, by Lemma~\ref{lem:second}, we see
\begin{align*}
  \psi_G(R'_t) &= \psi_G(R_t[s, u]) \cdot \psi_G(Q_u[y, u])^{-1} \cdot \psi_G(R_t[y, t])\\
   &\neq \psi_G(R_t[s, u]) \cdot \psi_G(P_u[y, u])^{-1} \cdot \psi_G(R_t[y, t]) = \psi_T(t),\\[1mm]
  \ell(R'_t) &= \ell(R_t[s, u]) + \ell(Q_u[y, u]) + \ell(R_t[y, t])\\
  &\leq \ell(R_t[s, u]) + \ell(\overline{R_t}[y, u]) + \ell(R_t[y, t]) = \ell(R_t). \qedhere
\end{align*}
\end{proof}

Lemma~\ref{lem:simplify} implies the other of the two key properties of lowest blossoms as follows.
One can obtain a shortest unorthodox $s$--$t$ path in $G$ by doing so recursively in the graph $\tG$ obtained by shrinking a lowest blossom into its base, which completes the correctness of our recursive strategy to solve the shortest unorthodox path problem.

\begin{lemma}\label{lem:C_e}
For an $s$-SPT $T$ of $(G, \ell)$ and a lowest blossom $C$ with base $b$,
let $\tG \coloneqq G[{C \to b}]$, $\tell \coloneqq \ell[{C \to b}]$, and $\tT \coloneqq T[C \to b]$.
If $t \in \tV$, then the following properties hold.
\begin{itemize}
  \setlength{\itemsep}{.5mm}
\item[$(1)$]
If $G$ has an unorthodox $s$--$t$ path (with respect to $T$), then so does $\tG$ (with respect to $\tT$).
\item[$(2)$]
For any shortest unorthodox $s$--$t$ path $\tQ_t$ in $\tG$,
the simplification of $\tQ_t[b \to C]$ is a shortest unorthodox $s$--$t$ path in $G$.
\end{itemize}
\end{lemma}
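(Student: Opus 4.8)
The plan is to reduce both assertions to the single statement that the minimum length $\mathrm{opt}(G)$ of an unorthodox $s$--$t$ path in $G$ (with respect to $T$; set to $+\infty$ if none exists) equals the corresponding quantity $\mathrm{opt}(\tG)$ for $(\tG,\tell,\tT)$; note that a shortest unorthodox $s$--$t$ path exists whenever an unorthodox one does, since there are finitely many $s$--$t$ paths. One inequality, together with an explicit optimal witness in $G$, is immediate from the machinery already built: given a shortest unorthodox $s$--$t$ path $\tQ_t$ in $\tG$, the expanded walk $R_t \coloneqq \tQ_t[b \to C]$ satisfies $\ell(R_t) = \tell(\tQ_t)$ and $\psi_G(R_t) = \psi_\tG(\tQ_t) \neq \psi_\tT(t) = \psi_T(t)$ by Lemma~\ref{lem:shrink}-(1),(3), and each vertex of $\tV = V \setminus (V(C) \setminus \{b\})$ appears in it at most once by Lemma~\ref{lem:shrink}-(2). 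Hence its simplification $R'_t$ (Definition~\ref{def:simplify}) is, by Lemma~\ref{lem:simplify}, an unorthodox $s$--$t$ \emph{path} in $G$ with $\ell(R'_t) \le \ell(R_t) = \tell(\tQ_t) = \mathrm{opt}(\tG)$. Thus $\tG$ feasible implies $G$ feasible and $\mathrm{opt}(G) \le \mathrm{opt}(\tG)$; and once we also know $\mathrm{opt}(\tG) \le \mathrm{opt}(G)$, this forces $\ell(R'_t) = \mathrm{opt}(G)$, i.e.\ $R'_t$ is a shortest unorthodox $s$--$t$ path in $G$, which is exactly assertion~(2). Assertion~(1) is the remaining feasibility implication, and it will come out of the same inequality.

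So everything rests on the reverse inequality $\mathrm{opt}(\tG) \le \mathrm{opt}(G)$: from a shortest unorthodox $s$--$t$ path $Q$ in $G$ one must produce an unorthodox $s$--$t$ path $\tQ$ in $\tG$ with $\tell(\tQ) \le \ell(Q)$ (this also yields ``$G$ feasible $\Rightarrow$ $\tG$ feasible'', completing~(1)). I would first \emph{simplify $Q$ inside $G$} and then transcribe the result into $\tG$. Applying Definition~\ref{def:simplify} with $R_t \coloneqq Q$ (legitimate, as a path trivially has every vertex once) gives, by Lemma~\ref{lem:simplify}, an unorthodox $s$--$t$ path $Q'$ in $G$ with $\ell(Q') \le \ell(Q)$; and the case analysis defining $Q'$ shows that $Q'$ meets the blossom-plus-stem region along only a single ``canonical excursion'': $Q'$ is a subpath of $Q$ disjoint from $V(C)$ (and from the relevant part of $V(P_b)$), then a subpath of the form $\tP_b[y,b] * R^i_{b,w}$ or its reverse for one vertex $w \in V(C) \setminus \{b\}$, one index $i \in \{1,2\}$, and one $y \in V(P_b)$, then another subpath of $Q$ of the same kind --- where $R^1_{b,w} = P_w[b,w]$ and $R^2_{b,w} = Q_w[b,w]$ are precisely the two $b$--$w$ paths of Lemma~\ref{lem:second} used to build $\tG$ in Definition~\ref{def:shrink}. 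This excursion, together with the edge of $Q$ incident to $w$ leaving the blossom, is exactly the data carried by one edge of $\tF_{C,b}$ together with a portion of the stem $\tP_b$ (which shrinking leaves untouched), so replacing it accordingly turns $Q'$ into an $s$--$t$ walk $\tQ$ in $\tG$ with, by construction, $\tell(\tQ) = \ell(Q') \le \ell(Q)$ and $\psi_\tG(\tQ) = \psi_G(Q') \neq \psi_T(t) = \psi_\tT(t)$, hence unorthodox.

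I expect the main work --- rather than a conceptual obstacle --- to be checking that this $\tQ$ is a genuine \emph{simple path} in $\tG$ and not merely a walk. This follows from the vertex-disjointness facts already proved inside Lemma~\ref{lem:simplify}: in each case of Definition~\ref{def:simplify} the two subpaths of $Q$ surviving into $Q'$ are disjoint from $V(C)$, from the relevant portion $V(P_b[y,b])$ of the stem, and from each other (being sub-intervals of the path $Q$), so the only vertices shared among the pieces of $\tQ$ --- a subpath $Q[s,\cdot]$, a stem piece $\tP_b[\cdot,b]$, one edge $\tf_i \in \tF_{C,b}$, and a subpath $Q[\cdot,t]$ --- are the obvious junctions. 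One verifies this uniformly over Cases~1, 2.1, and 2.2 of Definition~\ref{def:simplify} (the case $Q' = Q$ being exactly the situation where $Q$ already lies in $\tG$, since then $V(Q) \cap (V(C) \setminus \{b\}) = \emptyset$).

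Finally, combining the two inequalities gives feasibility equivalence between $G$ and $\tG$ and $\mathrm{opt}(G) = \mathrm{opt}(\tG)$ whenever either is feasible. The feasibility equivalence is assertion~(1), and, fed back into the first paragraph, it shows that the simplification $R'_t$ of $\tQ_t[b \to C]$ has length $\mathrm{opt}(G)$, that is, is a shortest unorthodox $s$--$t$ path in $G$, which is assertion~(2).
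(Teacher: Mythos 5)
Your proposal is correct and takes essentially the same approach as the paper: both directions are established by simplifying a candidate path in $G$ (either the expansion $\tQ_t[b\to C]$ or a given unorthodox path), appealing to Lemma~\ref{lem:simplify}, and then using the observation that the simplified path lifts to a path in $\tG$ with at most one edge from $\tF_{C,b}$, closing the argument via Lemma~\ref{lem:shrink}. The paper compresses the two inequalities you separate into a single chain $\ell(Q_t) \leq \tell(\tQ_t) \leq \tell(\tR'_t) = \ell(R'_t) \leq \ell(R_t)$, and, like you, it asserts the ``simplified path lifts to a path in $\tG$'' step without a detailed case check, so your level of detail matches the original.
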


\begin{proof}
Fix an unorthodox $s$--$t$ path $R_t$ in $G$, and let $R'_t$ be its simplification (possibly, $R'_t = R_t$).
We then have $\psi_G(R'_t) \neq \psi_T(t)$ and $\ell(R'_t) \leq \ell(R_t)$ by Lemma~\ref{lem:simplify}.
Moreover, in any case of Definition~\ref{def:simplify}, there exists a corresponding $s$--$t$ path $\tR'_t$ in $\tG$ with $|E(\tR'_t) \cap \tF_{C, b}| \leq 1$ and $\tR'_t[b \to C] = R'_t$.
By Lemma~\ref{lem:shrink}, we see the property (1) as $\psi_\tG(\tR'_t) = \psi_G(R'_t) \neq \psi_T(t) = \psi_\tT(t)$ and the property (2) as
\begin{align*}
  \ell(Q_t) \leq \ell(\tQ_t[b \to C]) = \tell(\tQ_t) \leq \tell(\tR'_t) = \ell(R'_t) \leq \ell(R_t),
\end{align*}
where $Q_t$ denotes the simplification of $\tQ_t[b \to C]$.
\end{proof}

\section{An $\mathrm{O}(nm)$-Time Algorithm with Blossom Shrinking}\label{sec:algorithm}
In this section, we present an $\mathrm{O}(nm)$-time recursive algorithm for the shortest unorthodox path problem.
In addition, as a byproduct, we show an $\mathrm{O}(n(m + n \log n))$-time algorithm for finding a shortest non-zero cycle, which completes the proof of Theorem~\ref{thm:cycle}.

\subsection{Algorithm Description}\label{sec:description}
Recall that the input graph $G$ is assumed to be connected (by extracting the connected component that contains $s$ and $t$ in advance if necessary).
After obtaining a shortest-path tree $T$ of $(G, \ell)$ rooted at $s$ by {\sc Dijkstra}$[G, \ell, s]$, we can use the following algorithm to solve the shortest unorthodox path problem, whose correctness has already been confirmed in Section~\ref{sec:CUC}.

\begin{algorithm*}[{{\sc SUP}$[G, \ell, s, t, T]$}]
\begin{description}
  \setlength{\itemsep}{0mm}
\item[]

\item[Input:]
  A connected $\Gamma$-labeled graph $G = (V, E)$, edge lengths $\ell \in \RR_{\geq 0}^E$, distinct vertices $s, t \in V$,
  and an $s$-SPT $T$ of $(G, \ell)$ with $\dist_T$ and $\psi_T$.

\item[Output:]
  A shortest unorthodox $s$--$t$ path in $G$ if any, or the message ``NO'' otherwise.
  
\item[Step 1.]\vspace{1mm}
  If every edge $e = \{u, v\} \in E \setminus T$ is consistent with $T$ (i.e., $\psi_T(u) \cdot \psi_G(e, uv) = \psi_T(v)$), then halt with the message ``NO'' (cf.~Lemma~\ref{lem:tree}(2)).

\item[Step 2.]
  Otherwise, pick an inconsistent edge $e = \{u, v\}$ that minimizes $\dist_T(u) + \dist_T(v) + \ell(e)$,
  so that $C_e = P_u[b_e, u] * (u, e, v) * \overline{P_v}[v, b_e]$ is a lowest blossom with base $b_e$ (cf.~Definition~\ref{def:canonical}).
  
\item[Step 3.]
  If $t \in V(C_e) \setminus \{b_e\}$, then halt with returning the $s$--$t$ path $Q_t$ in $G$ obtained as follows (cf.~Lemma~\ref{lem:second}(3)):
  $Q_t \coloneqq P_u * (u, e, v) * \overline{P_v}[v, t]$ if $t \in V(P_v) \setminus V(P_u)$,
  and $Q_t \coloneqq P_v * (v, e, u) * \overline{P_u}[u, t]$ otherwise (then, $t \in V(P_u) \setminus V(P_v)$).
  
\item[Step 4.]
  Otherwise, construct a $\Gamma$-labeled graph $\tG = (\tV, \tE)$ with edge lengths $\tell \in \RR_{\geq 0}^\tE$ and an $s$-SPT $\tT \subseteq \tE$ of $(\tG, \tell)$ by shrinking $C_e$ into $b_e$, i.e., set $\tG \leftarrow G[C_e \to b_e]$, $\tell \leftarrow \ell[C_e \to b_e]$, and $\tT \leftarrow T[C_e \to b_e]$ (cf.~Definition~\ref{def:shrink}), where $\dist_\tT$ and $\psi_\tT$ are obtained just by restricting $\dist_T$ and $\psi_T$, respectively, to $\tV$ (cf.~Lemma~\ref{lem:shrink}(3)).
  In order to keep the resulting graph not having too many edges, we remove redundant edges if arise (see Section~\ref{sec:time} for the detail).
  
\item[Step 5.]
  Perform {\sc SUP}$[\tG, \tell, s, t, \tT]$ recursively, and halt with the following output.\vspace{-1.5mm}
  \begin{itemize}
  \setlength{\itemsep}{.5mm}
  \item
    If the output of {\sc SUP}$[\tG, \tell, s, t, \tT]$ is a shortest unorthodox $s$--$t$ path $\tQ_t$ in $\tG$, then return the simplification of the expanded $s$--$t$ walk $\tQ_t[b_e \to C_e]$ in $G$ (cf.~Definitions~\ref{def:expand} and \ref{def:simplify} and Lemma~\ref{lem:C_e}(2)).
  \item
    Otherwise, return the message ``NO'' (cf.~Lemma~\ref{lem:C_e}(1)).
  \end{itemize}
\end{description}
\end{algorithm*}

As remarked in the introduction just before Theorem~\ref{thm:cycle}, we also find a shortest non-zero cycle by computing one lowest blossom $C_r$ with respect to one $r$-SPT $T_r$ of $(G, \ell)$ for each vertex $r \in V$, whose correctness is seen as follows.

Fix a vertex $r \in V$, an $r$-SPT $T_r$ of $(G, \ell)$, and a lowest blossom $C_r$.
Let $e = \{u, v\} \in E(C_r) \setminus T_r$ be the corresponding inconsistent edge,
and define $W_e \coloneqq P_u * (u, e, v) * P_v$ as a closed walk in $G$ with end vertex $r$ that is included in $T_r \cup \{e\}$ (cf.~Fig.~\ref{fig:C_e}).
Then, by Lemma~\ref{lem:lb}, $W_e$ is a shortest non-zero closed walk with end vertex $r$,
which implies that $C_r$ is a shortest non-zero cycle in $G$ if $r$ is on some shortest non-zero cycle in $G$
by the nonnegativity of edge length.

The algorithm for finding a shortest non-zero cycle is formally described as follows.
Recall that Dijkstra's algorithm can be implemented in ${\rm O}(m + n \log n)$ time with the aid of Fibonacci heaps.
Thus, the total computational time is ${\rm O}(n(m + n \log n))$, which completes the proof of Theorem~\ref{thm:cycle}.

\begin{algorithm*}[{{\sc ShortestNon-zeroCycle}$[G, \ell]$}]
\begin{description}
  \setlength{\itemsep}{0mm}
\item[]
\item[Input:]
  A connected $\Gamma$-labeled graph $G = (V, E)$ and edge lengths $\ell \in \RR_{\geq 0}^E$.

\item[Output:]
  A shortest non-zero cycle in $G$ if any, or the message ``NO'' otherwise.
  
\item[Step 1.]\vspace{1mm}
  For each vertex $r \in V$, compute an $r$-SPT $T_r$ of $(G, \ell)$ by {\sc Dijkstra}$[G, \ell, r]$,
  and then find a lowest blossom $C_r$ with respect to $T_r$ (cf.~Step~2 of {\sc SUP}).
  
\item[Step 2.]
  Return the shortest cycle among $C_r$ $(r \in V)$ if at least one has been found, and ``NO'' otherwise.
\end{description}
\end{algorithm*}

\subsection{Computational Time Analysis}\label{sec:time}
In this section, we show that the computational time of {\sc SUP}$[G, \cdots]$ is bounded by ${\rm O}(nm)$.

First of all, since $G$ is connected, we have $n = {\rm O}(m)$.
In addition, although $G$ may have arbitrarily many parallel edges in general, we can reduce them to at most two between each pair of two vertices so that $m = {\rm O}(n^2)$ in advance, because more than two parallel edges between the same pair are \emph{redundant} for our problems as follows.
If there are two parallel edges with the same label, then one that is not longer than the other is enough.
Moreover, if there are three parallel edges with distinct labels, then any non-zero (or unorthodox) $s$--$t$ path traversing a longest one among them can be transformed into another non-zero (or unorthodox, respectively) $s$--$t$ path not longer than the original path by replacing the edge with at least one of the other two.
Starting with the edgeless graph with vertex set $V$, one can construct such a necessary subgraph of $G$ in ${\rm O}(m)$ time by sequentially adding the edges in an arbitrary order and deciding which one (or two) should be retained whenever we encounter a new edge that is parallel with some retained edge.

We now proceed to the main part.
Since the input graph of a recursive call after shrinking a lowest blossom into the base has strictly fewer vertices, the number of recursive calls is at most $n$.
In addition, as shown later in Claim~\ref{cl:edges}, the number of edges in any recursion is bounded by ${\rm O}(m)$
if we reduce redundant parallel edges whenever some appears as a new edge not contained in the new shortest-path tree
(then, each pair of two vertices has at most two parallel edges).
Hence, it suffices to show that {\sc SUP}$[G, \cdots]$ can be implemented in ${\rm O}(m)$ time except for the recursive call.

It is easy to confirm that a na\"ive implementation of Steps~1--2 requires ${\rm O}(m)$ time.
In addition,
Steps~3 and~5 (except for the recursive call {\sc SUP}$[\tG, \cdots]$) are done in ${\rm O}(n)$ time,
because the relevant edges in $G$ are included in $T \cup \{e\} \cup (E(\tQ_t) \cap E)$, whose size is ${\rm O}(n)$
as $T$ is a spanning tree of $G$ and $\tQ_t$ is a path in $\tG$.

Let us focus on Step~4, i.e., shrinking a lowest blossom $C_e$ into the base $b_e$.
We use the same notation as Definitions~\ref{def:canonical} and \ref{def:shrink} by setting $C = C_e$ and $b = b_e$.
Since the number of removed edges and new edges is ${\rm O}(m)$ by definition,
this can be implemented in ${\rm O}(m)$ time in total 
by computing the labels and lengths of new edges in constant time per edge as follows.
Consider new edges $\tf_i = \{b, x\}$ $(i = 1, 2)$ corresponding to a removed edge $f = \{w, x\} \in E_{C, b}$.
Since $R_{b, w}^1 = P_w[b, w]$ and $R_{b, w}^2 = Q_w[b, w]$,
we have $\psi_G(R_{b, w}^1) = \psi_T(b)^{-1} \cdot \psi_T(w)$, $\ell(R_{b, w}^1) = \dist_T(w) - \dist_T(b)$,
$\psi_G(R_{b, w}^2) = \psi_T(b)^{-1} \cdot \psi_G(Q_w)$, and $\ell(R_{b, w}^2) = \ell(Q_w) - \dist_T(b)$, where
\begin{align*}
\psi_G(Q_w) &= \begin{cases}
  \psi_G(W_e) \cdot \psi_T(w) & (w \in V(P_v)),\\
  \psi_G(W_e)^{-1} \cdot \psi_T(w) & (w \in V(P_u)),
\end{cases}\\[1mm]
\ell(Q_w) &= \ell(W_e) - \dist_T(w),\\[1mm]
\psi_G(W_e) &= \psi_T(u) \cdot \psi_G(e, uv) \cdot \psi_T(v)^{-1},\\[1mm]
\ell(W_e) &= \dist_T(u) + \dist_T(v) + \ell(e).
\end{align*}
Using these equations, we can compute $\psi_\tG(\tf_i, bx)$ and $\tell(\tf_i)$ by a constant number of elementary operations.
In addition, if some $\tf_i \in \tF_{C, b} \setminus \tT$ is redundant,
then we immediately remove it\footnote{Even if the graph is represented by an adjacency list, this can be done in $\mathrm{O}(1)$ time on average by using a hash table in practice, or in $\mathrm{O}(n^2 + m) = \mathrm{O}(nm)$ time in total by constructing an adjacency matrix at the beginning and maintaining it globally (since the adjacency between two remaining vertices is non-decreasing in the recursion depth).} to keep the number of edges in the current graph appropriately small (cf.~Claim~\ref{cl:edges}),
which does not affect the recursive call {\sc SUP}$[\tG, \cdots]$.

The following claim completes the analysis.

\begin{claim}\label{cl:edges}
For any graph $G' = (V', E')$ that appears as an input of a recursive call of {\sc SUP},
if $G'$ has at most two parallel edges between each pair of two vertices, then we have $|E'| \leq 2|E|$.
\end{claim}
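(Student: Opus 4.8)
The plan is to follow the \emph{provenance} of every edge of $G'$ back to an edge of the original graph $G$, and to show that all edges of $G'$ sharing a provenance must join the same pair of vertices of $G'$; the bound $|E'|\le 2|E|$ then drops out of a short counting argument using the at-most-two-parallel-edges hypothesis.

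First I would set up the finite sequence of graphs $G=G_0,G_1,\dots,G_k=G'$ arising along the chain of recursive calls of {\sc SUP}, where $G_{i+1}$ is obtained from $G_i$ by shrinking one lowest blossom into its base (as in Definition~\ref{def:shrink}) followed by the removal of the redundant parallel edges described in Section~\ref{sec:time}. Since shrinking only deletes blossom vertices other than the base, each vertex set $V_i$ is a subset of $V$, and I would let $\rho_i\colon V\to V_i$ be the surjection sending each original vertex to the vertex of $G_i$ into which it has been merged (or to itself if it has not), so that $\rho_0=\mathrm{id}$ and, when $G_{i+1}$ is formed by shrinking a blossom $C$ into its base $b$, we have $\rho_{i+1}(p)=b$ precisely when $\rho_i(p)\in V(C)\setminus\{b\}$ and $\rho_{i+1}(p)=\rho_i(p)$ otherwise. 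Because each new edge $\tf_j$ created during a shrink descends from exactly one edge $f$ of the current graph—the unique edge of $E_{C,b}$ with one end in $V(C)\setminus\{b\}$ and the other outside $V(C)$—and because the parallel-edge reduction only deletes edges, these descent relations compose into a well-defined map $\phi\colon E'\to E$ assigning to each edge of $G'$ its unique ancestor in $G$.

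The key step is the invariant, which I would prove by induction on $i$: every edge $h\in E_i$ with $\phi(h)=\{p,q\}$ has endpoints exactly $\rho_i(p)$ and $\rho_i(q)$. The base case $i=0$ is immediate. For the inductive step I would walk through Definition~\ref{def:shrink}: an edge of $E_i$ outside $E_{C,b}$ survives unchanged, and neither of its endpoints is renamed by $\rho_{i+1}$; an edge $f=\{w,x\}\in E_{C,b}$ with $w\in V(C)\setminus\{b\}$ and $x\notin V(C)$ is replaced by the two parallel edges $\tf_1,\tf_2=\{b,x\}$, which matches $\rho_{i+1}(p)=b$ and $\rho_{i+1}(q)=x$ when $\{w,x\}=\{\rho_i(p),\rho_i(q)\}$; and every other edge of $E_{C,b}$ (those with both ends on the blossom, including ends at the base) is deleted outright, producing no new edge, while the subsequent parallel-edge reduction merely discards a subset of the current edges. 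Hence the invariant persists. Writing $\bar g\coloneqq\{\rho_k(p),\rho_k(q)\}$ for $g=\{p,q\}\in E$, it follows that every edge $g'\in E'$ joins the two vertices of $\overline{\phi(g')}$, so the endpoint pair of an edge of $G'$ always lies in the set $\{\,\bar g\mid g\in E\,\}$, which has at most $|E|$ elements; since by hypothesis at most two edges of $E'$ share any one endpoint pair, we conclude $|E'|\le 2|E|$.

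I expect the main obstacle to be the bookkeeping in the inductive step: one must check carefully that the edges of $E_{C,b}$ lying entirely on the blossom (in particular those incident to the base $b$) are removed without replacement, so that no edge of $G'$ is ever ``split'' between two distinct endpoint pairs, and that the parallel-edge reductions interleaved with the shrinks, since they only delete edges and never relabel or merge them, cannot break the invariant or the definition of $\phi$.
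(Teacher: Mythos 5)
Your proposal is correct and takes essentially the same route as the paper. The paper defines the provenance map (there called $\sigma = \sigma_k \circ \cdots \circ \sigma_1$) exactly as your $\phi$, and then asserts---without the explicit induction on the vertex surjections $\rho_i$---that all edges in $\sigma^{-1}(e)$ connect the same pair of vertices $u', v' \in V'$, namely the vertices into which the endpoints of $e$ are merged by the shrinking operations; the at-most-two-parallel-edges hypothesis then gives $|\sigma^{-1}(e)| \le 2$ and hence $|E'| \le 2|E|$. Your version makes the paper's ``virtually merged'' statement precise via the $\rho_i$ and proves it by induction, and organizes the final count by bounding the number of distinct endpoint pairs rather than summing preimage sizes, but the underlying idea and structure are identical.
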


\begin{proof}
We prove $|E'| \leq 2|E|$ by constructing a mapping $\sigma \colon E' \to E$ such that $|\sigma^{-1}(e)| \leq 2$ for every $e \in E$.
Let $G = G_0, G_1, \dots, G_k = G'$ be the sequence of the $\Gamma$-labeled graphs such that $G_j = \tG_{j-1}$ $(j = 1, 2, \dots, k)$, i.e., 
{\sc SUP}$[G_j, \cdots]$ is recursively called in Step~5 of {\sc SUP}$[G_{j-1}, \cdots]$.
We define $\sigma$ as the composition of mappings $\sigma_j \colon E_j \to E_{j-1}$ $(j = 1, 2, \dots, k)$ defined as follows,
where $E_j$ denotes the edge set of $G_j$ for each $j = 0, 1, \dots, k$.

Let $e \in E_j$. If $e \in E_{j-1}$, then $\sigma_j(e) \coloneqq e$.
Otherwise, $e$ is a new edge $\tf_i = \{b, x\} \in E_j \setminus E_{j-1}$ $(i \in \{1, 2\})$
corresponding to some removed edge $f = \{w, x\} \in E_{j-1} \setminus E_j$,
and define $\sigma_j(\tf_i) = f$.

Let $\sigma \coloneqq \sigma_1 \circ \sigma_2 \circ \cdots \circ \sigma_k$.
For any edge $e = \{u, v\} \in E = E_0$,
each edge $e' \in \sigma^{-1}(e) \subseteq E' = E_k$ connects the same pair of two vertices $u', v' \in V'$ by the above definition,
where $u'$ and $v'$ are the vertices into which $u$ and $v$, respectively, are virtually merged by shrinking operations.
Thus, since each pair of two vertices has at most two parallel edges in $G'$,
we conclude that $|\sigma^{-1}(e)| \leq 2$ holds for every $e \in E$.
\end{proof}

\section{An $\mathrm{O}(m \log n)$-Time Algorithm without Explicit Shrinking}\label{sec:fast}
In this section, we present a faster algorithm for the shortest unorthodox path problem by avoiding explicitly shrinking blossoms, which runs in $\mathrm{O}(m \log n)$ time and completes the proof of Theorem~\ref{thm:main}.
The correctness of the algorithm is guaranteed based on a dual LP formulation (and, in fact, vice versa).

\subsection{Algorithm Description}
In this section, we present an $\mathrm{O}(m \log n)$-time algorithm for computing the shortest unorthodox path distance $q \in \RR^V \cup \{+\infty\}$ from $s$ to each vertex $v \in V$, where $q(v) = +\infty$ means that $G$ contains no unorthodox $s$--$v$ path.
We first describe the algorithm with a na\"ive implementation (which requires $\mathrm{O}(nm)$ time as with the recursive algorithm), and then explain that it can be implemented in $\mathrm{O}(m \log n)$ time with the aid of basic data structures (Lemma~\ref{lem:time}).
The correctness is shown in Lemma~\ref{lem:fast} via a dual LP formulation introduced in Section~\ref{sec:formulation}.
Following the case analysis in the proof of Lemma~\ref{lem:fast} (or the intuition of the auxiliary directed graph $\vec{H}$ given in Section~\ref{sec:formulation}),
one can reconstruct an unorthodox $s$--$t$ path $Q_t$ in $G$ with $\ell(Q_t) = q(t)$ in linear time.

The algorithm intuitively emulates the recursive algorithm {\sc SUP} without explicit blossom shrinking.
We instead maintain the information of into which vertex each vertex $v \in V$ is shrunk as $\parent(v)$,
and that of which edge $e \in E$ corresponds to an inconsistent edge forming a blossom in the current graph (from the viewpoint of the recursive algorithm) as a subset $F \subseteq E$ and its height $h(e)$ (which is doubled for convenience).
The shrinking structure can be nested, and $\root(v)$ denotes the vertex into which $v$ is shrunk in the current graph, which is obtained by repeating $v \leftarrow \parent(v)$ until $\parent(v) = v$.

We define $\depth_T(v) \coloneqq |E(P_v)|$ for each $v \in V$ and denote by $\pred_T(v)$ the predecessor of $v \in V \setminus \{s\}$ in $T$ (the vertex $u$ such that $P_v = P_u * (u, e, v)$ for some $e \in T$).
In Step~3.1, we choose an edge $e = \{u, v\} \in E$ corresponding to some inconsistent edge forming a lowest blossom $C$ in the current graph.
In Step~3.2, we trace back from $\root(u)$ and $\root(v)$ until their lowest common ancestor in the current $s$-SPT.
Then, in Step~3.3, the set $B$ corresponds to the vertex set of the lowest blossom $C$ minus its stem $w_1 = w_2$ (see Fig.~\ref{fig:C_e}).

\begin{algorithm*}[{{\sc FastSUP}$[G, \ell, s, T]$}]
\begin{description}
  \setlength{\itemsep}{0mm}
\item[]

\item[Input:]
  A connected $\Gamma$-labeled graph $G = (V, E)$, edge lengths $\ell \in \RR_{\geq 0}^E$, a vertex $s \in V$, and an $s$-SPT $T$ of $(G, \ell)$ with $\dist_T$ and $\psi_T$.

\item[Output:]
  The shortest unorthodox path distance $q \in \RR^V \cup \{+\infty\}$ from $s$.
  
\item[Step 1.]\vspace{1mm}
  Set $q(v) \leftarrow +\infty$ and $\parent(v) \leftarrow v$ for each $v \in V$.

\item[Step 2.]
  For each $e = \{u, v\} \in E$, set $h(e) \leftarrow \dist_T(u) + \dist_T(v) + \ell(e)$ if $e$ is inconsistent, and $h(e) \leftarrow +\infty$ otherwise.
  Initialize a set $F$ as the set of inconsistent edges.
  
\item[Step 3.]
  While $\Delta \coloneqq \min\left\{\, h(e) \mid e \in F \,\right\} < +\infty$, do the following.
  \begin{description}
  \setlength{\itemsep}{.5mm}
    \item[Step 3.1.]
    	Pick $e = \{u, v\} \in F$ with $h(e) = \Delta$, update $F \leftarrow F \setminus \{e\}$, and set $w_1 \leftarrow \root(u)$ and $w_2 \leftarrow \root(v)$.
    \item[Step 3.2.]
      Set $B \leftarrow \emptyset$. While $w_1 \neq w_2$, let $i \in \{1, 2\}$ be an index with $\depth_T(w_i) \geq \depth_T(w_{3-i})$, update $B \leftarrow B \cup \{w_i\}$ and then $w_i \leftarrow \root(\pred_T(w_i))$.
    \item[Step 3.3.]
      For each $w \in B$,
      \begin{itemize}
      \setlength{\itemsep}{1mm}
      \item update $\parent(w) \leftarrow w_1 \ (= w_2)$,
      \item update $q(w) \leftarrow h(e) - \dist_T(w)$, and
      \item for each consistent edge $f = \{w, x\} \in \delta_G(w)$, if $h(f) > q(w) + \dist_T(x) + \ell(f)$, then update $h(f) \leftarrow q(w) + \dist_T(x) + \ell(f)$ and $F \leftarrow F \cup \{f\}$.
      \end{itemize}
  \end{description}
\end{description}
\end{algorithm*}

It can be observed that this algorithm also runs in $\mathrm{O}(nm)$ time.
Furthermore, with the aid of basic data structure, it can be implemented in $\mathrm{O}(m \log n)$ time as follows.

\begin{lemma}\label{lem:time}
  {\sc FastSUP}$[G, \ell, s, T]$ can be implemented in $\mathrm{O}(m \log n)$ time.
\end{lemma}

\begin{proof}
In order to maintain $F \subseteq E$ and $\root \in V^V$ (as well as $\parent \in V^V$), we utilize a priority queue with key $h$ and a disjoint-set forest (with path compression), respectively (see, e.g., \cite{CLRS2009}).

First, each vertex in $V$ is added to $B$ at most once as follows.
We only see $\root(v)$ (which remains in the current graph) for some $v \in V$ in Step~3.2 and we set $\parent(w) \leftarrow w_1 \neq w$ for each $w \in B$ in Step~3.3 (after that, $w$ cannot be $\root(v)$ for any $v \in V$).
Since each operation is performed in $\mathrm{O}(\log n)$ time (in the amortized sense) and all vertices appearing in Steps~3.1--3.2 but $w_1$ in Step~3.3 are added to $B$ (and then will never appear), the total cost for the disjoint-set forest is $\mathrm{O}(n \log n)$.

Next, each edge $e \in E$ is added to $F$ at most twice, because it happens only in Step~2 if $e$ is inconsistent, and only in Step~3.3 with one of its end vertices in $B$ if $e$ is consistent.
Since each operation (pop and push) is performed in $\mathrm{O}(\log m) = \mathrm{O}(\log n)$ time, the total cost for the priority queue is $\mathrm{O}(m \log n)$.
We remark that, if the key of $e$ is not equal to $h(e)$ in Step~3.1 (which may happen if $e$ is pushed twice), then nothing happens in the iteration because $\root(u) = \root(v)$ (some blossom with respect to the corresponding edge has already been shrunk).

The rest (initialization and update of $q$ and $h$) is simply done in linear time, and we are done.
\end{proof}

\subsection{Dual LP Formulation}\label{sec:formulation}
The unconstrained shortest path problem in a directed graph $\vec{G} = (V, \vec{E})$ admits a dual LP formulation as follows (see, e.g., \cite[$\S$~8.2]{Schrijver2003}):
\begin{align}\label{eq:dual_LP_unconstrained}
  \begin{array}{rl}
  \mathrm{maximize} & p(t) - p(s)\\[2mm]
  \mathrm{subject~to} & p(v) - p(u) \leq \ell(\vec{e}) \quad (\vec{e} = uv \in \vec{E}),\\[1mm]
  & p \in \RR^V,
  \end{array}
\end{align}
where the dual variable $p$ is called a \emph{potential}.
The undirected case admits an analogous formulation by replacing each edge with a pair of directed edges of opposite directions.
Then, for any $s$-SPT $T$ of $(G, \ell)$, the vector $\dist_T \in \RR^V$ gives an optimal potential.
In other words, Dijkstra's algorithm computes an optimal solution to \eqref{eq:dual_LP_unconstrained}.

We provide a similar formulation for the shortest unorthodox path problem, to which the algorithm {\sc FastSUP} computes an optimal solution.
Specifically, we construct an auxiliary directed graph $\vec{H}$ with vertex set $V \cup \{s^*\}$ and edge length $\ell^*$ such that some shortest $s^*$--$v$ path in $\vec{H}$ corresponds to a shortest unorthodox $s$--$v$ path in $G = (V, E)$ with edge length $\ell$.
We remark that, in the following construction, we just give an intuition without the correctness proof, which will be given by showing that the algorithm {\sc FastSUP}$[G, \ell, s, T]$ indeed solves the dual LP (Lemma~\ref{lem:fast}).

Let us fix an $s$-SPT $T$ of $(G, \ell)$ and introduce a new vertex $s^*$ as a supersource.
By the definition of blossoms (cf.~Definition~\ref{def:canonical}), each inconsistent edge $e = \{u, v\} \in E$ (as well as a blossom $C_e$ with base $b_e$) gives an unorthodox $s$--$w$ path $Q_w$ with $E(Q_w) \subseteq T \cup \{e\}$ for each $w \in V(P_u) \triangle V(P_v) = V(C_e) \setminus \{b_e\}$, where $X \triangle Y$ denotes the \emph{symmetric difference} $(X \setminus Y) \cup (Y \setminus X)$ of two sets $X$ and $Y$.
Thus, we create a directed edge $\vec{e}_w = s^*w$ with length $\ell^*(\vec{e}_w) = \ell(Q_w) = \dist_T(u) + \dist_T(v) + \ell(e) - \dist_T(w)$.

Let $e = \{u, v\} \in E$ be a consistent edge.
For each $w \in V(P_v) \setminus V(P_u)$, if $G$ has an unorthodox $s$--$u$ path $Q_u$ with $V(Q_u) \cap V(P_v[w, v]) = \emptyset$, then $Q_u$ can be extended to an unorthodox $s$--$w$ path $Q_w$ by the path $(u, e, v) * \overline{P_v}[v, w]$ (see Fig.~\ref{fig:LP1}).
Thus, we create a directed edge $\vec{e}_w = uw$ with length $\ell^*(\vec{e}_w) = \ell(Q_w) - \ell(Q_u) = \dist_T(v) + \ell(e) - \dist_T(w)$.
For each $w \in V(P_u) \setminus V(P_v)$, if $G$ has an unorthodox $s$--$u$ path $Q_u$ with $w \in V(Q_u)$, $Q_u[s, w] = P_w$, and $V(Q_u[w, u]) \cap V(P_v) = \emptyset$, then $P_v$ can be extended to an unorthodox $s$--$w$ path $Q_w$ by the path $(v, e, u) * \overline{Q_u}[u, w]$ (see Fig.~\ref{fig:LP2}).
Thus, we create a directed edge $\vec{e}_w = uw$ with length $\ell^*(\vec{e}_w) = \ell(Q_w) - \ell(Q_u) = \dist_T(v) + \ell(e) - \dist_T(w)$.
Note that the lengths are the same in both cases, and hence we do not need to distinguish them (just take $w \in V(P_u) \triangle V(P_v)$).
We also remark that the validity of these edges is unclear at this point because they may yield some undesired $s$--$w$ walk (not path) when $Q_u$ does not satisfy the assumptions.
This concern, however, turns out to be unfounded in Lemma~\ref{lem:fast}.

\begin{figure}[tb]
   \begin{center}
    \includegraphics[scale=0.8]{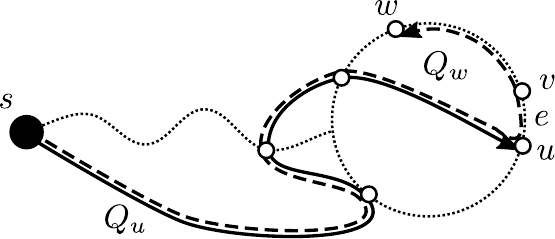}
   \end{center}\vspace{-3mm}
   \caption{A consistent edge $e = \{u, v\}$ extends an unorthodox $s$--$u$ path $Q_u$ (solid) to an unorthodox $s$--$w$ path $Q_w$ (dashed) for $w \in V(P_v) \setminus V(P_u)$, where the dotted lines represents $T \cup \{e\}$.}\vspace{7mm}
   \label{fig:LP1}
\end{figure}

\begin{figure}[tb]
 \begin{center}
  \includegraphics[scale=0.8]{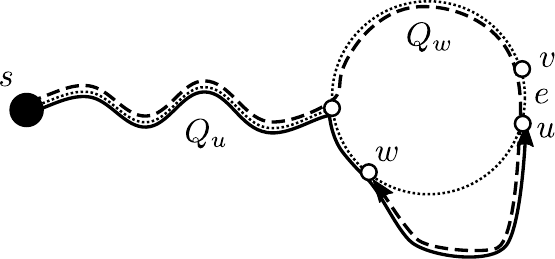}
 \end{center}\vspace{-3mm}
 \caption{Along an unorthodox $s$--$u$ path $Q_u$ (solid), a consistent edge $e = \{u, v\}$ extends $P_v$ to an unorthodox $s$--$w$ path $Q_w$ (dashed) for $w \in V(P_u) \setminus V(P_v)$, where the dotted lines represents $T \cup \{e\}$.}
 \label{fig:LP2}
\end{figure}

We then obtain a shortest path instance $(\vec{H}, \ell^*)$, whose dual LP is as follows:
\begin{align}\label{eq:LP_unorthodox}
  \begin{array}{rll}
  \mathrm{maximize} & q(t) \\[2mm]
  \mathrm{subject~to} & q(w) \leq \dist_T(u) + \dist_T(v) + \ell(e) - \dist_T(w) & \left(\begin{array}{l}
    e = \{u, v\} \in E \colon \mathrm{inconsistent}\\
    w \in V(P_u) \triangle V(P_v)\end{array}\right),\\[4mm]
  & q(w) \leq q(u) + \dist_T(v) + \ell(e) - \dist_T(w) & \left(\begin{array}{l}
    e = \{u, v\} \in E \colon \mathrm{consistent}\\
    w \in V(P_u) \triangle V(P_v)\end{array}\right),\\[1mm]
    & q \in \mathbb{R}^{V},
  \end{array}
\end{align}
where we slightly modify the original formulation \eqref{eq:dual_LP_unconstrained} by assuming $q(s^*) = 0$ without loss of generality.

In what follows, we show the correctness of \eqref{eq:LP_unorthodox} in the sense that, for any optimal solution $q \in \RR^V$ and any shortest unorthodox $s$--$t$ path $Q_t$ in $G$, we have $q(t) = \ell(Q_t)$ (also, \eqref{eq:LP_unorthodox} is unbounded if and only if $G$ contains no unorthodox $s$--$t$ path).
This also means that the shortest unorthodox path problem can be reduced to the unconstrained shortest path problem in directed graphs with $\mathrm{O}(n)$ vertices and $\mathrm{O}(nm)$ edges (at least if we only require the optimal value without the reconstruction of an unorthodox $s$--$t$ path attaining it).
In particular, the above argument provides a totally dual integral LP formulation of the shortest unorthodox path problem.

We first observe the weak duality, which intuitively implies that the auxiliary graph $\vec{H}$ has enough directed edges.
The strong duality, which means that the second-type directed edges (corresponding to consistent edges) are valid, is confirmed via the algorithm {\sc FastSUP}.

\begin{lemma}\label{lem:weak_dual}
  For any feasible solution $q \in \RR^V$ to \eqref{eq:LP_unorthodox}, any vertex $w \in V \setminus \{s\}$, and any unorthodox $s$--$w$ path $R_w$ in $G$, we have $q(w) \leq \ell(R_w)$.
\end{lemma}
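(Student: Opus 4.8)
\textbf{Proof plan for Lemma~\ref{lem:weak_dual}.}
The plan is to induct on the number of \emph{inconsistent} edges traversed by $R_w$ (which is at least one by Lemma~\ref{lem:tree}-(2), since $R_w$ is unorthodox). Write $R_w = (s = v_0, e_1, v_1, \dots, e_k, v_k = w)$ and let $e_j = \{v_{j-1}, v_j\}$ be the \emph{first} inconsistent edge along $R_w$, so that the prefix $R_w[s, v_{j-1}]$ uses only consistent edges and $R_w[s, v_{j-1}]$ is orthodox, i.e.\ $\psi_T(v_{j-1}) = \psi_T(s)\cdot\psi_G(R_w[s,v_{j-1}])$ (this is the computation already carried out in the proof of Lemma~\ref{lem:tree}-(2)). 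The idea is to split the length of $R_w$ at $e_j$ and at an appropriately chosen vertex on one of the tree paths $P_{v_{j-1}}$, $P_{v_j}$, feeding the inconsistent-edge inequality of \eqref{eq:LP_unorthodox} once and the consistent-edge inequalities of \eqref{eq:LP_unorthodox} telescopically along the remainder.

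First I would handle the base-like contribution of the prefix. Set $x \coloneqq v_{j-1}$, $y \coloneqq v_j$, so $e = e_j = \{x,y\}$ is inconsistent. We want a vertex $w' \in V(P_x) \triangle V(P_y)$ so that the first constraint of \eqref{eq:LP_unorthodox} applies: take $w'$ to be the endpoint (among $x,y$) that is \emph{not} the base $b_e$ of the blossom $C_e$ — equivalently, any vertex on $W_e$ lying strictly on the $C_e$-part but, to keep bookkeeping simple, one can just take $w' \in \{x,y\} \setminus \{b_e\}$, which lies in $V(P_x)\triangle V(P_y)$ since the base is the last common vertex of $P_x$ and $P_y$. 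The constraint then gives $q(w') + \dist_T(w') \le \dist_T(x) + \dist_T(y) + \ell(e)$. Combining this with Lemma~\ref{lem:tree}-(1) applied to $R_w[s, x]$ (namely $\ell(R_w[s,x]) \ge \dist_T(x) - \dist_T(s) = \dist_T(x)$) and to $R_w[s,y]$ symmetrically, plus the nonnegativity of the remaining length $\ell(R_w[w', w])$ if $w' \ne w$, we aim to reach $q(w') \le \ell(R_w[s,w'])$; then we must still ``transport'' this from $w'$ to $w$ along $R_w[w', w]$.

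The transport step is where the second constraint of \eqref{eq:LP_unorthodox} enters. Along the suffix $R_w[w', w]$ I would like to argue, edge by edge, that $q$ can only decrease by at most the edge length modulo the $\dist_T$-corrections: for a consistent edge $e' = \{a,b\}$ traversed from $a$ to $b$ with $b \in V(P_a)\triangle V(P_b)$, the inequality reads $q(b) + \dist_T(b) \le q(a) + \dist_T(b) + \ell(e')$ — wait, more carefully it is $q(b) + \dist_T(b) \le q(a) + \dist_T(?) + \ell(e')$ in the notation of \eqref{eq:LP_unorthodox}, so one must check the orientation and which of $a,b$ plays the role of $u$; the correct reading yields $q(b) \le q(a) + \ell(e')$ after the $\dist_T$ terms cancel, provided the traversed vertex lies in the symmetric difference. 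The genuine subtlety — and the step I expect to be the main obstacle — is that not every vertex along $R_w[w',w]$ need lie in the relevant symmetric difference, and the suffix may itself traverse further inconsistent edges; so a clean single induction requires either (i) choosing $w'$ and the split point more cleverly so the suffix is orthodox (hence by Lemma~\ref{lem:tree}-(2) uses no inconsistent edges, and the consistent-edge inequalities chain cleanly), or (ii) inducting on the number of inconsistent edges and invoking the inductive hypothesis on a shorter unorthodox subpath. I would pursue option (ii): decompose $R_w = R_w[s,x] * (x,e,y) * R_w[y,w]$, bound $\ell(R_w[s,x]) + \ell(e) + \ell(R_w[y,w])$ from below using Lemma~\ref{lem:tree}-(1) on the two pieces and the inconsistent-edge constraint, and separately recognize that $R_w[y,w]$ together with tree paths forms a shorter unorthodox path (or is handled by the consistent chain) — the careful case analysis of which endpoint is the blossom base, and verifying membership in $V(P_\cdot)\triangle V(P_\cdot)$ at each application, is the routine-but-delicate heart of the argument.
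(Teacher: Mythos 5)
Your plan does not close in its current form, and you have correctly diagnosed the sticking point yourself without resolving it. The constraints in \eqref{eq:LP_unorthodox} only bound $q(w)$ for a vertex $w$ that lies in the symmetric difference $V(P_u)\triangle V(P_v)$ associated with the edge $\{u,v\}$ in question; they are not general edge-wise relaxations that can be chained along an arbitrary walk. So your proposed ``transport'' of the bound from $w'\in\{x,y\}\setminus\{b_e\}$ out to $w$ along the suffix $R_w[w',w]$ requires, at every intermediate consistent edge, that the vertex you are pushing the bound to sit in the right symmetric difference — and there is no reason for that to hold. Likewise, your option (ii) of cutting at the \emph{first} inconsistent edge $e_j=\{x,y\}$ and appealing to the inductive hypothesis on $R_w[y,w]$ does not obviously produce an unorthodox subpath: orthodoxy of the prefix $R_w[s,x]$ does force $\psi_T(x)\cdot\psi_G(e_j,xy)\neq\psi_T(y)$, but whether $R_w[y,w]$ is itself unorthodox (i.e., $\psi_T(y)\cdot\psi_G(R_w[y,w])\neq\psi_T(w)$) is independent information, so the induction hook is missing.

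The paper sidesteps the transport entirely by a different choice of splitting edge: let $e=\{u,v\}$ be the \emph{last} edge on $R_w$ for which $w\in V(P_v)\setminus V(P_u)$. Such an edge always exists (since $w\in V(P_w)$ and $w\notin V(P_s)=\{s\}$), and by construction $w$ itself lies in $V(P_u)\triangle V(P_v)$, so the relevant constraint of \eqref{eq:LP_unorthodox} applies \emph{directly} to $q(w)$ — no chaining is needed. If $e$ is inconsistent, the first constraint plus Lemma~\ref{lem:tree}-(1) finishes immediately. If $e$ is consistent, the second constraint expresses $q(w)$ in terms of $q(u)$ (or $q(v)$), and the key observation is that because $R_w$ is unorthodox, at least one of the two walks $R_w[s,u]$ and $P_w*\overline{R_w}[w,v]$ is an unorthodox $s$--$u$ (resp.\ $s$--$v$) path; this is where the induction hypothesis is invoked. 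Finally, the induction measure is $\mu(R_w)=n\cdot|E(R_w)\setminus T|+|E(R_w)|$ rather than a raw count of inconsistent edges, because the substitution of the tree prefix $P_w$ for $R_w[s,w]$ is what drives the decrease (the replaced prefix must contain a non-tree edge). These three ingredients — last-edge choice so $w$ is in the symmetric difference, the either-or unorthodoxy of the two pieces, and the $\mu$-measure — are what your sketch is missing.
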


\begin{proof}
Fix an arbitrary feasible solution $q \in \RR^V$ to \eqref{eq:LP_unorthodox}, and we prove this lemma by induction on $\mu(R_w) \coloneqq n\cdot|E(R_w) \setminus T| + |E(R_w)|$.
Let $e = \{u, v\}$ be the last edge on $R_w$ such that $w \in V(P_v)$ but $w \not\in V(P_u)$ (such an edge always exists since $w \in V(P_w)$ and $w \not\in V(P_s) = \{s\}$).
We separately consider the two cases when $e$ is inconsistent and when $e$ is consistent.

\medskip\noindent
\textbf{Case~1.}~
Suppose that $e$ is inconsistent.
Then, by Lemma~\ref{lem:tree}(1) and the first constraint in \eqref{eq:LP_unorthodox}, we have
\begin{align*}
  \ell(R_w) &= \ell(R_w[s, u]) + \ell(e) + \ell(R_w[v, w]) \geq \dist_T(u) + \ell(e) + \dist_T(v) - \dist_T(w) \geq q(w).
\end{align*}

\smallskip\noindent
\textbf{Case~2.}~
Suppose that $e$ is consistent.
From the choice of $e$, we have $w \in V(P_x)$ for every $x \in V(R_w[v, w])$, and hence $P_w * \overline{R_w}[w, v]$ is a path.
Then, since $R_w$ is unorthodox, so is at least one of $R_w[s, u]$ and $P_w * \overline{R_w}[w, v]$.

\medskip\noindent
\textbf{Case~2.1.}~
Suppose that $R_w[s, u]$ is unorthodox, and then $u \neq s$.
As $\mu(R_w[s, u]) < \mu(R_w)$, we have $q(u) \leq \ell(R_w[s, u])$ by the induction hypothesis.
Hence, by Lemma~\ref{lem:tree}(1) and the second constraint in \eqref{eq:LP_unorthodox}, we have
\begin{align*}
  \ell(R_w) &= \ell(R_w[s, u]) + \ell(e) + \ell(R_w[v, w]) \geq q(u) + \ell(e) + \dist_T(v) - \dist_T(w) \geq q(w).
\end{align*}

\smallskip\noindent
\textbf{Case~2.2.}~
Suppose that $P_w * \overline{R_w}[w, v]$ is unorthodox, and then $v \neq w$.
Since $w \not\in V(R_w[s, v])$ and $w \in V(P_v)$, we have $R_w[s, v] \neq P_v$, which implies $E(R_w[s, v]) \setminus T \neq \emptyset$.
As $E(P_w) \subseteq T$, we have $\mu(P_w * \overline{R_w}[w, v]) < \mu(R_w)$, and hence $q(v) \leq \ell(P_w * \overline{R_w}[w, v])$ by the induction hypothesis.
Hence, by Lemma~\ref{lem:tree}(1) and the second constraint in \eqref{eq:LP_unorthodox}, we have
\begin{align*}
  \ell(R_w) &= \ell(R_w[s, u]) + \ell(e) + \ell(R_w[v, w])\\
  &= \ell(R_w[s, u]) + \ell(e) + \ell(P_w * \overline{R_w}[w, v]) - \dist_T(w)\\
  &\geq \dist_T(u) + \ell(e) + q(v) - \dist_T(w) \geq q(w). \qedhere
\end{align*}
\end{proof}

\subsection{Correctness}
We prove the correctness of the algorithm {\sc FastSUP} and the dual LP formulation \eqref{eq:LP_unorthodox} as follows.
Recall that the weak duality has already been confirmed in Lemma~\ref{lem:weak_dual}.

\begin{lemma}\label{lem:fast}
  {\sc FastSUP}$[G, \ell, s, T]$ computes an optimal solution $q$ to $\eqref{eq:LP_unorthodox}$, which satisfies $q(t) = \ell(Q_t)$ for some unorthodox $s$--$t$ path $Q_t$ in $G$.
\end{lemma}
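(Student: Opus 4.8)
The plan is to establish three things: (i) the algorithm terminates within the stated time bound; (ii) the vector $q$ it outputs is feasible for \eqref{eq:LP_unorthodox}; and (iii) $q(t)$ equals the length of some unorthodox $s$--$t$ path, which together with Lemma~\ref{lem:weak_dual} forces optimality. For the running time, I would observe that each edge of $G$ is pushed into $\cQ$ at most a constant number of times (each inconsistent edge once in Step~2, and each consistent edge once per endpoint when that endpoint enters some $B$ in Step~3.3; a vertex enters at most one set $B$ over the whole run because once it gets a finite $q$-value and a $\cF$-parent it is never revisited), so $\cQ$ handles ${\rm O}(m)$ insertions/extractions at ${\rm O}(\log n)$ each. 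The union-find operations are near-linear; the only subtle point is the while-loop in Step~3.2, which walks up the tree $T$ using $\root_\cF$ and $\parent_T$; I would charge each iteration to a vertex that is about to be absorbed into the current component, so the total work across all iterations is ${\rm O}(n)$ up to the inverse-Ackermann factor, giving ${\rm O}(m\log n)$ overall.

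For correctness, the key is to read the algorithm as a Dijkstra-like process that simulates the recursive blossom-shrinking of Section~\ref{sec:algorithm} without materializing the shrunk graphs. I would prove by induction on the order in which vertices are added to sets $B$ that: when $w$ receives its value $q(w) = h(e) - \dist_T(w)$ in Step~3.3 for the popped edge $e$, there is an unorthodox $s$--$w$ path $Q_w$ in $G$ with $\ell(Q_w) = q(w)$, and moreover this value is the minimum length of any unorthodox $s$--$w$ path. The existence half mirrors Lemma~\ref{lem:second}: when $e$ is the inconsistent edge being processed, the vertices in $B$ are exactly those on the current "lowest blossom" (the fragment of the $P_u$--$e$--$P_v$ walk between the two $\cF$-roots $w_1,w_2$), so $Q_w$ is the detour $P_w$-or-$Q$-style path of Lemma~\ref{lem:second}; when $e=\{w,x\}$ is a consistent edge inserted with priority $q(w)+\dist_T(x)+\ell(f)$, the path is obtained by appending $f$ to the already-constructed unorthodox path to $w$ and then correcting along $T$, exactly as in the simplification cases (Definition~\ref{def:simplify}). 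The optimality half — that $q(w)$ is not too small — follows from the priority-queue discipline: because $\ell\ge 0$, priorities are popped in nondecreasing order, and any unorthodox $s$--$w$ path must, by Lemma~\ref{lem:tree}-(2), use an inconsistent edge, hence its length is bounded below by the priority of the corresponding queue entry, which is at least the value assigned when $w$ is first reached.

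Feasibility of $q$ for \eqref{eq:LP_unorthodox} I would check constraint by constraint. For the first (inconsistent-edge) family: for an inconsistent $e=\{u,v\}$ and $w\in V(P_u)\triangle V(P_v)$, when $e$ is popped either $w$ has already been placed in some earlier $B$ (so $q(w) = h(e')-\dist_T(w)$ for an edge $e'$ popped no later than $e$, giving $q(w)+\dist_T(w) = h(e') \le h(e) = \dist_T(u)+\dist_T(v)+\ell(e)$), or $w\in B$ at this step (so equality $q(w)+\dist_T(w)=h(e)$ holds). For the second (consistent-edge) family: when the consistent edge $f=\{w,x\}$ is inserted — which happens precisely when the endpoint $w$ enters a set $B$ and thus acquires its final $q(w)$ — it carries priority $h(f)=q(w)+\dist_T(x)+\ell(f)$; if $x$ ever gets a finite value it is assigned when some edge of priority $\le h(f)$ is popped, giving $q(x)+\dist_T(x)\le h(f)=q(w)+\dist_T(x)+\ell(f)$, and if $x$ never gets a finite value the inequality holds vacuously since $q(x)=+\infty$ is not what we want — so I would instead note that for the LP we only need the inequality in the direction $q(\cdot)+\dist_T(\cdot)\le q(\cdot)+\dist_T(\cdot)+\ell(\cdot)$ with the left side finite, which is covered. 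I expect the main obstacle to be the bookkeeping in Step~3.2: pinning down precisely that the union of sets $B$ produced over the whole execution, organized by $\cF$, faithfully represents the nested family of lowest blossoms that the recursive algorithm shrinks — i.e., that $\root_\cF$ always points to the base of the innermost blossom containing a vertex, and that $\depth_T$-guided ascent finds the correct base $b_e$. Establishing this invariant cleanly (and that a vertex, once in some $B$, never needs to be reprocessed) is where the real work lies; once it is in place, the three items above drop out, and combining $q(t)\le\ell(Q_t)$ from Lemma~\ref{lem:weak_dual} with the constructed path of equal length yields optimality and the claimed equality $q(t)=\ell(Q_t)$.
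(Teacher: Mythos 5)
Your high-level plan coincides with the paper's: establish the time bound, show $q$ is feasible for \eqref{eq:LP_unorthodox}, exhibit an unorthodox $s$--$w$ path of length at most $q(w)$ for every $w$ with $q(w) < +\infty$, and then invoke Lemma~\ref{lem:weak_dual} to force optimality. The running-time analysis also matches. However, there are two concrete gaps where your sketch would not go through as written.

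First, the feasibility check for the consistent-edge family is misstated. The constraint for a consistent edge $e=\{u,v\}$ is $q(w) + \dist_T(w) \leq q(u) + \dist_T(v) + \ell(e)$ for \emph{every} $w \in V(P_u) \triangle V(P_v)$; this involves three vertices, not two. Your argument only verifies the inequality for the single endpoint $x$ of the inserted edge $f = \{w,x\}$, conflating the LP's variable $w$ with an endpoint of the edge. The paper instead handles the whole family uniformly: upon popping $e_i$ with priority $h(e_i)$, every $w \in V(P_u)\triangle V(P_v)$ either already has $q(w) + \dist_T(w) = h(e_j) \leq h(e_i)$ for an earlier pop $j < i$ (by monotonicity of the popped priorities), or is placed into $B$ so that $q(w) + \dist_T(w) = h(e_i)$ exactly. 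This in turn depends on knowing that Step~3.2 computes precisely $B = \{\, w \in V(P_u)\triangle V(P_v) \mid q(w) = +\infty \,\}$ — which is guaranteed by the paper's structural invariants (its properties 1--3 on $\cF$: finite $q$-value iff non-root, components are $T$-connected, and each root is the unique minimum-depth vertex of its component). You explicitly defer establishing this $\cF$ invariant ("this is where the real work lies"), but without it neither the feasibility proof nor the time analysis of Step~3.2 can be completed.

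Second, and more seriously for the optimality half: exhibiting an unorthodox $s$--$w$ path of length $q(w)$ requires showing the constructed walk is actually a \emph{simple path}. In the paper's Case~2, $Q_w$ is built by concatenating an inductively-obtained path $Q_u$ with an edge and a subpath of $T$, and the proof that this concatenation has no repeated vertices hinges on the strengthened inductive invariant $V(Q_u) \subseteq V(P_u) \cup \comp_{\cF_{i-1}}(u)$ (the paper's property~5). Your induction states only the existence of an unorthodox path of the right length, with no control on its vertex set, so the concatenation step has no justification. This is exactly the subtlety that Definitions~\ref{def:expand}--\ref{def:simplify} were introduced to address in the $O(nm)$ algorithm — expansion can produce a walk, not a path — and the same issue reappears here. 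Finally, the direct ``priority-queue discipline'' argument you give for the lower bound on $q(w)$ is not tight as stated: an unorthodox $s$--$w$ path through a far inconsistent edge $e'$ has length bounded below by $h(e') - \dist_T(w)$ only with a sign caveat on $\dist_T$ terms, and the argument anyway duplicates what Lemma~\ref{lem:weak_dual} already supplies once feasibility is established, so it is better dropped.
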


\begin{proof}
Let $k$ be the number of iterations of Step~3, and let $e_i$ denote the edge picked in the $i$-th iteration for each $i = 1, 2, \dots, k$.
When a consistent edge $f = \{w, x\}$ is added to $F$ in Step~3.3, we have
\begin{align*}
  q(w) + \dist_T(x) + \ell(f) = h(e) - \dist_T(w) + \dist_T(x) + \ell(f) \geq h(e).
\end{align*}
Since $h(f) \geq h(e)$ holds if $f$ is already in $F$, the sequence $h(e_1), h(e_2), \dots, h(e_k)$ is non-decreasing, where $h(e_i)$ denotes the value when $e_i$ is picked in Step~3.1.

For $i = 0, 1, \dots, k$, let $E_i \coloneqq \{e_1, e_2, \dots, e_i\}$, and let $\root_i$ denote $\root \in V^V$ after the $i$-th iteration.
We prove that the following properties are preserved by induction on $i$, where $q(w)$ is also the value after the $i$-th iteration:
\begin{itemize}
\setlength{\itemsep}{.5mm}
\item[1.] $q(w) = +\infty$ if and only if $\root_i(w) = w$;
\item[2.] for each $w \in V$, the set $\comp_i(w) \coloneqq \{\, x \in V \mid \root_i(x) = \root_i(w) \,\}$ induces a connected subgraph in $T$;
\item[3.] for each $b \in V$ with $\root_i(b) = b$, we have $\argmin\left\{\, \depth_T(w) \mid w \in \comp_i(b) \,\right\} = \{b\}$;
\item[4.] $q$ satisfies the constraints in \eqref{eq:LP_unorthodox} for all edges in $E_i$;
\item[5.] for each $w \in V$ with $q(w) \neq +\infty$, there exists an unorthodox $s$--$w$ path $Q_w$ in $G$ such that $\ell(Q_w) \leq q(w)$ and $E(Q_w) \subseteq T \cup E[\comp_i(w)]$, where $E[X]$ denotes the set of edges induced by a vertex set $X \subseteq V$.
\end{itemize}

Suppose that these properties hold at the end of the algorithm {\sc FastSUP}$[G, \ell, s, T]$.
As $F$ is empty at last, we have $q(u) = q(v) = +\infty$ for every edge $e = \{u, v\} \in E \setminus E_k$.
Hence, $q$ is feasible to \eqref{eq:LP_unorthodox}, which completes the proof by the property 5 (recall that the weak duality has already been observed in Lemma~\ref{lem:weak_dual}).

Initially, we have $E_0 = \emptyset$ and $\root_0(v) = v$ for every $v \in V$, and all the properties trivially hold.
Fix $i = 1, 2, \dots, k$, suppose that the properties 1--5 hold for $i - 1$, and let $e_i = \{u, v\}$.
The property~1 holds after the $i$-th iteration as we update $q(w)$ if and only if $w \in B$ in Step~3.3, and then $\parent(w)$ is updated simultaneously, which implies $\root_i(w) \neq w$ (vice versa).
The property~3 also holds by the update rule of $\parent$ in Step~3.3 since $\depth_T(u) < \depth_T(v)$ holds whenever $u = \pred_T(v)$ (by definition) or $u = \root_{i-1}(v)$ (by the induction hypothesis).

The property~2 is seen as follows.
If $\root_i(x) = \root_{i-1}(x)$ for every $x \in \comp_i(w)$, then $\comp_i(w) = \comp_{i-1}(w)$ and we are done by the induction hypothesis.
Otherwise, by symmetry and transitivity, it suffices to show that,  for any vertex $x \in \comp_i(w)$ with $\root_i(x) \neq \root_{i-1}(x)$, all the vertices on the path between $x$ and $\root_i(x)$ in $T$ are in $\comp_i(w)$.
If $\root_i(x) \neq \root_{i-1}(x)$, then we have $\root_{i-1}(x) \in B$ in Step~3.3, and $\root_i(x) = \parent(\root_{i-1}(x))$ (after Step~3.3) is found by tracing $\pred_T$ and $\root_{i-1}$ from $x$.
For any $y \in V \setminus \{s\}$, by definition, $\pred_T(y)$ is adjacent to $y$ in $T$, and all the vertices on the path between $y$ and $\root_{i-1}(y)$ in $T$ are in $\comp_{i-1}(y)$ (which induces a connected subgraph in $T$ by induction hypothesis).
Thus we conclude that all the vertices on the path between $x$ and $\root_i(x)$ in $T$ are in $\comp_i(\root_i(x)) = \comp_i(w)$ (recall that we update $\parent(y) = \root_i(x)$ for all the vertices $y \in B$ that appear in tracing back).

For the property~4, we have to care for a new constraint
\begin{align}\label{eq:new}
  q(w) \leq h(e_i) - \dist_T(w)
\end{align}
for each $w \in V(P_u) \triangle V(P_v)$.
This is because $h(e_i) = \dist_T(u) + \dist_T(v) + \ell(e_i)$ if $e_i$ is inconsistent, and otherwise $h(e_i) = \min\left\{q(u) + \dist_T(v) + \ell(e_i),\, q(v) + \dist_T(u) + \ell(e_i)\right\}$ (where $q(u)$ and $q(v)$ are their final values) as $e_i$ has the minimum priority $h(e_i)$ and the sequence $h(e_1), h(e_2), \dots, h(e_k)$ is non-decreasing.
As we set $q(w) \leftarrow h(e_i) - \dist_T(w)$ for each $w \in B$, the inequality \eqref{eq:new} holds for every $w \in B$.
If $w \in (V(P_u) \triangle V(P_v)) \setminus B$, then $\root_{i-1}(w) \neq w$ and hence $q(w) \neq +\infty$ by the induction hypothesis.
In particular, we have $q(w) = h(e_j) - \dist_T(w)$ for some $j < i$.
Since $h(e_j) \leq h(e_i)$, the inequality \eqref{eq:new} holds also in this case.
Thus we have confirmed the property 4.

In what follows, we prove the property 5.
We fix $w \in B$, and do the following case analysis, which is analogous to that in the proof of Lemma~\ref{lem:weak_dual} (see also Figs.~\ref{fig:LP1} and \ref{fig:LP2} for Case~2).

\medskip\noindent
\textbf{Case~1.}~
Suppose that $e_i$ is inconsistent.
If $w \in V(P_v) \setminus V(P_u)$, then we define $Q_w \coloneqq P_u * (u, e_i, v) * \overline{P_v}[v, w]$, so that $Q_w$ is indeed unorthodox and we have $\ell(Q_w) = \dist_T(u) + \ell(e_i) + \dist_T(v) - \dist_T(w) = q(w)$.
Otherwise, $w \in V(P_u) \setminus V(P_v)$, and we are done by symmetry.

\medskip\noindent
\textbf{Case~2.}~
Suppose that $e_i$ is consistent.
In this case, we cannot regard $u$ and $v$ as symmetric.
Suppose that $h(e_i) = q(u) + \dist_T(v) + \ell(e_i)$, and let $Q_u$ be an unorthodox $s$--$u$ path in $G$ such that $\ell(Q_u) \leq q(u)$ and $V(Q_u) \subseteq T \cup E[\comp_{i-1}(u)]$ by induction hypothesis.

\medskip\noindent
\textbf{Case~2.1.}~
Suppose that $w \in V(P_v) \setminus V(P_u)$.
We then define $Q_w \coloneqq Q_u * (u, e_i, v) * \overline{P_v}[v, w]$.
From the properties 1--3 (induction hypothesis), we see that $w$ is the root of $\comp_{i-1}(w)$ (since $q(w) = +\infty$ before Step~3.3), which separates $V(P_v[w, v])$ from $\comp_{i-1}(u) \neq \comp_{i-1}(w)$ in the tree $T$ (since $w \not\in V(P_u)$).
This implies
\begin{align*}
  V(P_v[w, v]) \cap V(Q_u) &\subseteq \bigl(V(P_v[w, v]) \cap V(P_u)\bigr) \cup \bigl(V(P_v[w, v]) \cap \comp_{i-1}(u)\bigr) = \emptyset.
\end{align*}
Thus, $Q_w$ is indeed an unorthodox path with $\ell(Q_w) \leq q(u) + \ell(e_i) + \dist_T(v) - \dist_T(w) = q(w)$.

\medskip\noindent
\textbf{Case~2.2.}~
Suppose that $w \in V(P_u) \setminus V(P_v)$.
In this case, we have $w \in V(Q_u)$ and $Q_u[s, w] = P_w$ as follows.
Suppose to the contrary that this is not the case.
Then, $Q_u$ leaves $P_u$ at some vertex $b \in V(P_w) \setminus \{w\}$, and let $f = \{b, x\} \in E(Q_u) \setminus E(P_u)$ be the first edge on $Q_u$.
From the property~5 (induction hypothesis), $f$ is induced by $\comp_{i-1}(u)$ and hence $b \in \comp_{i-1}(u)$.
As $u \in \comp_{i-1}(u)$ by definition, by the properties~2 and 3 (induction hypothesis), we must have $w \in \comp_{i-1}(u) \setminus \{\root_{i-1}(u)\}$, which contradicts the property 1 (induction hypothesis).

We then define $Q_w \coloneqq P_v * (v, e_i, u) * \overline{Q_u}[u, w]$, and as $w \not\in V(P_v)$, we see that
\begin{align*}
  V(P_v) \cap V(Q_u[w, u]) &\subseteq \bigl(V(P_v) \cap V(P_u[w, u])\bigr) \cup \bigl(V(P_v) \cap (\comp_{i-1}(u) \setminus \{\root_{i-1}(u)\})\bigr) = \emptyset.
\end{align*}
Thus, $Q_w$ is indeed an unorthodox path with $\ell(Q_w) \leq \dist_T(v) + \ell(e_i) + q(u) - \dist_T(w) = q(w)$.

\medskip
Finally, in all cases, we have 
\begin{align*}
  E(Q_w) &\subseteq \bigl(T \cup E[\comp_{i-1}(u)]\bigr) \cup \bigl(T \cup E[\comp_{i-1}(v)]\bigr) \cup \{e_i\} \subseteq T \cup E[\comp_{i}(w)],
\end{align*}
where the last inclusion holds because $w \in B$ has the same root as both $u$ and $v$ (after Step~3.3).
Thus we are done.
\end{proof}

\begin{remark}
The algorithm {\sc FastSUP}$[G, \ell, s, T]$ computes the lengths of shortest unorthodox $s$--$w$ paths in $G$ for all possible vertices $w \in V \setminus \{s\}$ at once, and each shortest unorthodox $s$--$w$ path $Q_w$ is reconstructed in linear time by following the case analysis in the proof of Lemma~\ref{lem:fast}.
Also, the recursive algorithm {\sc SUP}$[G, \ell, s, t, T]$ can be modified so, e.g., by virtually setting $t = s$ until it returns ``NO'' (i.e., all possible vertices have been shrunk) and by remembering shortest unorthodox $s$--$w$ paths expanded into the original graph for all shrunk vertices $w \in V(C_e) \setminus \{b_e\}$ in Step~3 (which are utilized for expanding operations in further recursive calls).
\end{remark}

\section*{Acknowledgments}
The authors are deeply grateful to the anonymous reviewers of this paper and the preliminary version~\cite{Yamaguchi2020} for their valuable comments and suggestions.
This work was partially supported by RIKEN Center for Advanced Intelligence Project.

\end{document}